\newif\iffull
\title{Asymptotic Approximation by Regular Languages}
\author{Ryoma Sin'ya}{Akita University, Akita, Japan \and RIKEN AIP, Japan}{ryoma@math.akita-u.ac.jp}{https://orcid.org/0000-0002-8152-998X}{JSPS KAKENHI Grant Number JP19K14582}
\authorrunning{R.\, Sin'ya} 
\keywords{Automata, context-free languages, density, primitive words} 
\theoremstyle{plain}
\newtheorem{problem}[theorem]{Problem}
\title{Asymptotic Approximation\\ by Regular Languages}
\author{Ryoma Sin'ya\inst{1}}
\institute{Akita University\\
\email{ryoma@math.akita-u.ac.jp}\footnote{The author is also with RIKEN
AIP.}}
\newcommand{\CA}{{\cal A}}
\newcommand{\CJ}{{\cal J}}
\newcommand{\CL}{{\cal L}}
\newcommand{\CR}{{\cal R}}
\newcommand{\CH}{{\cal H}}
\newcommand{\pr}{\mathrm{Pr}}
\newcommand{\ie}{{\it i.e.}}
\newcommand{\eg}{{\it e.g.}}
\newcommand{\cf}{{\it cf. }}
\newcommand{\defeq}{\xlongequal{\!\!\!\mathtt{def}\!\!\!}}
\newcommand{\nat}{\mathbb{N}}
\newcommand{\diff}{\triangle}
\newcommand{\prim}{\mathsf{Q}}
\newcommand{\gold}{\mathsf{G}}
\newcommand{\dyck}{\mathsf{D}}
\newcommand{\pal}{\mathsf{P}}
\newcommand{\kemp}{\mathsf{K}}
\newcommand{\othree}{\mathsf{O}_3}
\newcommand{\ofour}{\mathsf{O}_4}
\newcommand{\maj}{\mathsf{M}}
\newcommand{\rev}{\mathrm{rev}}
\renewcommand{\alph}{\mathtt{Alph}}
\newcommand{\card}[1]{\#\!\left(#1\right)}
\newcommand{\CC}{{\cal C}}
\newcommand{\CD}{{\cal D}}
\newcommand{\reg}{\mathrm{REG}}
\newcommand{\cfl}{\mathrm{CFL}}
\newcommand{\dcfl}{\mathrm{DetCFL}}
\newcommand{\ucfl}{\mathrm{UnCFL}}
\newcommand{\slo}{<_{\text{lex}}}
\newcommand{\pd}[1]{\delta_{#1}}
\renewcommand{\d}{\pd{A}}
\newcommand{\pcd}[1]{\delta^*_{#1}}
\newcommand{\cd}{\pcd{A}}
\renewcommand{\pm}[1]{\mu_{#1}}
\newcommand{\plm}[1]{\underline{\mu}_{#1}}
\newcommand{\pum}[1]{\overline{\mu}_{#1}}
\newcommand{\lm}{\plm{\reg}}
\newcommand{\um}{\pum{\reg}}
\newcommand{\m}{\pm{\reg}}
\newcommand{\qedd}{\iffull
\else
\qed
\fi}
\begin{document}

\maketitle

\begin{abstract}
This paper investigates a new property of formal languages called
 \(\reg\)-measurability where \(\reg\) is the class of regular languages.
Intuitively, a language \(L\) is \(\reg\)-measurable if there exists an
 infinite sequence of regular languages that ``converges'' to \(L\).
A language without \(\reg\)-measurability has a complex shape in some
 sense so that it can not be (asymptotically) approximated by regular
 languages.
We show that several context-free languages are \(\reg\)-measurable (including languages with transcendental generating function and
 transcendental density, in particular), while
 a certain simple deterministic context-free language and the set of
 primitive words are \(\reg\)-immeasurable in a strong sense.
\end{abstract} 
\section{Introduction}
Approximating a complex object by more simple objects is a major concept
in both computer science and mathematics.
In the theory of formal languages, various types of approximations have
been investigated (\eg,
\cite{kappe2004,kappe2005,approx,CORDY2007125,CAMPEANU20013,Domaratzki}).
For example, Kappes and Kintala~\cite{kappe2004} introduced 
\emph{convergent-reliability} and \emph{slender-reliability} which
measure how a given deterministic automaton \(\CA\) nicely approximates a given
language \(L\) over an alphabet \(A\).
Formally \(\CA\) is said to accept \(L\) convergent-reliability if
 \iffull
 the
 ratio of the number of \emph{in}correctly accepted/rejected words of
 length \(n\)
\[
 \card{(L(\CA) \diff L) \cap A^n} /
 \card{A^n}
\]
tends to 0 if \(n\) tends to infinity,
\else
 the
 ratio
 \(\card{(L(\CA) \diff L) \cap A^n} / \card{A^n}\)
 of the number of \emph{in}correctly accepted/rejected words of
 length \(n\)
tends to 0 if \(n\) tends to infinity,
\fi
and is said to accept \(L\) slender-reliability if
the number of incorrectly accepted/rejected words of length \(n\) is always bounded
above by some constant \(c\): \ie, \(\card{(L(\CA) \diff L) \cap A^n}
\leq c\) for any \(n\).
Here \(L(\CA)\) denotes the language accepted by \(\CA\), \(\card{S}\)
denotes the cardinality of the set \(S\), \(\overline{L}\) denotes the
complement of \(L\) and \(\diff\) denotes the symmetric difference.
A slightly modified version of approximation is
\emph{bounded-\(\epsilon\)-approximation} which was introduced by Eisman
and Ravikumar.
They say that two languages \(L_1\) and \(L_2\) provide a
bounded-\(\epsilon\)-approximation of language \(L\) if
\(L_1 \subseteq L \subseteq L_2\) holds and
 the ratio of their length-\(n\) difference satisfies
\iffull
\[
 \card{(L_2 \setminus L_1) \cap A^n} / \card{A^n} \leq \epsilon
\]
\else
\(
 \card{(L_2 \setminus L_1) \cap A^n} / \card{A^n} \leq \epsilon
 \)
\fi
for every sufficiently large \(n \in \nat\).
Perhaps surprisingly, they showed that no pair of regular languages can
 provide a bounded-\(\epsilon\)-approximation of the
language \(\{w \in \{a,b\}^* \mid w \text{ has more } a \text{'s than } b
\text{'s}\}\) for any \(0 \leq \epsilon < 1\)~\cite{approx}.
This result is a very strong \emph{in}approximable (by regular languages)
example of certain non-regular languages.
Also, there is a different framework of approximation so-called
\emph{minimal-cover}~\cite{Domaratzki,CAMPEANU20013}, and
a notion represents some \emph{in}approximability by regular languages
so-called \emph{\(\reg\)-immunity}~\cite{immunity}.

A model of approximation introduced in this paper
  is rather close to the work of Eisman and Ravikumar~\cite{approx}.
Instead of approximating by a \emph{single} regular language, we consider an
  approximation of some non-regular language \(L\) by an \emph{infinite
  sequence}
  of regular languages that ``converges'' to \(L\).
Intuitively, we say that \(L\) is \emph{\(\reg\)-measurable} if there exists
  an infinite sequence of pairs of regular languages \((K_n, M_n)_{n
  \in \nat}\) such that
  \(K_n \subseteq L \subseteq M_n\) holds for all \(n\) and 
  the ``size'' of the difference \(M_n \setminus K_n\) tends to \(0\) if
  \(n\) tends to infinity.
  The formal definition of ``size'' is formally described in the next
  section: we use a notion called \emph{density (of languages)} for measuring the ``size'' of a language.

Although we used the term ``approximation'' in the title and there are
various research on this topic in formal language theory,
our work is strongly influenced by the work of Buck~\cite{Buck}
which investigates, as the title said, \emph{the measure theoretic approach to density}.
In~\cite{Buck} the concept of \emph{measure density} \(\mu\) of
subsets of natural numbers \(\nat\) was introduced.
Roughly speaking, Buck considered an arithmetic progression \(X = \{cn + d \mid n \in \nat \}\)
(where \(c, d \in \nat\), \(c\) can be zero) as a ``basic set'' whose \emph{natural density} as
\(\delta(X) = 1/c\) if \(c \neq 0\) and \(\delta(X) = 0\) otherwise, then defined the \emph{outer measure density}
\(\mu^*(S)\) of any subset \(S \subseteq \nat\) as
\begin{align*}
\mu^*(S) = \inf \Bigl\{ \sum_{i} \delta(X_i) \mid S \subseteq X
 & \text{ and } X \text{ is a finite union of }\\
 & \text{ disjoint arithmetic progressions } X_1, \ldots, X_k \Bigr\}.
\end{align*}
Then the \emph{measure density} \(\mu(S) = \mu^*(S)\) was introduced
for the sets satisfying the condition
\iffull
\begin{align}
 \mu^*(S) + \mu^*(\overline{S}) = 1
\end{align}
\else
(1) \(\mu^*(S) + \mu^*(\overline{S}) = 1\) 
\fi
where \(\overline{S} = \nat \setminus S\).
Technically speaking, the class \(\CD_\mu\) of all subsets of natural numbers satisfying
Condition~(1) is the \emph{Carath\'eodory extension} of the class
\[
 \CD_0 \defeq \{ X \subseteq \nat \mid X \text{ is a finite
 union of arithmetic progressions } \},
\]
see Section~2 of \cite{Buck} for more details.
Notice that here we regard a singleton \(\{d\}\) as an arithmetic progression
(the case \(c = 0\) for \(\{ cn + d \mid n \in \nat \}\)), any finite set
belongs to \(\CD_0\).
Buck investigated several properties of \(\mu\) and \(\CD_\mu\), and
showed that \(\CD_\mu\) \emph{properly} contains \(\CD_0\).

In the setting of formal languages, it is very natural to consider
the class \(\reg\) of regular languages as ``basic sets'' since it has
various types of representation, good closure properties and rich decidable properties.
Moreover, if we consider regular languages \(\reg_A\) over a unary alphabet \(A =
\{a\}\), then \(\reg_A\) is isomorphic to the class \(\CD_0\);
it is well known that the Parikh image \(\{|w| \mid w \in L\} \subseteq
\nat\) (where \(|w|\) denotes the
 length of \(w\)) of every regular language
 \(L\) in \(\reg_A\) is semilinear and hence it
 is just a finite union of arithmetic progressions.
From this observation, investigating the densities of regular languages and
its measure densities (\ie, \(\reg\)-measurability) for non-regular languages can be naturally
considered as an adaptation of Buck's study~\cite{Buck} for formal
language theory.

\subsection*{Our contribution}
In this paper we investigate \(\reg\)-measurability
(\(\simeq\) asymptotic approximability by regular languages) of
non-regular, mainly context-free languages.
The main results consist of three kinds. We show that: (1) several
context-free languages (including languages with \emph{transcendental generating function} and
 \emph{transcendental density}) are \(\reg\)-measurable [Theorem~\ref{thm:ab}--\ref{thm:suff}].
 (2) there are ``very large/very small'' (deterministic) context-free languages that
 are \(\reg\)-immeasurable in a strong sense [Theorem~\ref{thm:pmaj}].
 (3) the set of \emph{primitive words} is ``very large'' and
 \(\reg\)-immeasurable in a strong sense
 [Theorem~\ref{thm:prim1}--\ref{thm:prim2}].
Open problems and some possibility of an application of the notion of measurability to
classifying formal languages will be stated in Section~\ref{sec:fut}.

\iffull
The paper is organised as follows.
Section~\ref{sec:pre} provides mathematical background of densities of
formal languages.
The formal definition of \(\reg\)-approximability and
\(\reg\)-measurability are introduced in Section~\ref{sec:measure}.
The scenario of Section~\ref{sec:measure} mostly follows one of the
measure density introduced by Buck~\cite{Buck} which was described above.
In Section~\ref{sec:cfl}, we will give several examples of \(\reg\)-inapproximable but
\(\reg\)-measurable context-free languages.
These examples include, perhaps somewhat surprisingly, a language with a
\emph{transcendental density} which have been considered as a very complex
context-free language from a combinatorial viewpoint.
In Section~\ref{sec:prim}, we consider the set of so-called
\emph{primitive words} and its \(\reg\)-measurability.
Section~\ref{sec:fut} ends this paper with concluding remarks, some
future work and open problems.
\else
Due to the space limitation, we omit some parts of proofs in
Section~\ref{sec:cfl}~and~\ref{sec:prim}.
For detailed proofs we refer the reader to the full version~\cite{full} of this
paper.
\fi
We assume that the reader has a basic knowledge of formal language
theory. \section{Densities of Formal Languages}\label{sec:pre}
For a set \(S\), we write \(\card{S}\) for the cardinality of \(S\). The set of
natural numbers including \(0\) is denoted by \(\nat\).
For an alphabet $A$, we denote the set of all words (resp. all non-empty
words) over $A$ by $A^*$ (resp. $A^+$).
We write \(\varepsilon\) for the empty word and write $A^n$
(resp. $A^{<n}$) for the set of all words of length $n$
(resp. less than $n$).
For a language \(L\), we write \(\alph(L)\) for the set of all letters
appeared in \(L\).
For word $w \in A^*$ and a letter \(a \in A\), $|w|_a$ denotes the number of
occurrences of $a$ in $w$. A word \(v\) is said to be a \emph{factor} of
a word \(w\) if \(w = xvy\) for some \(x,y \in A^*\), further said to be a
\emph{prefix} of \(w\) if \(x = \varepsilon\).
For a language \(L \subseteq A^*\), we denote by \(\overline{L} = A^*
\setminus L\) the complement of \(L\).

A \emph{language class} \(\CC\) is a family of languages \(\{\CC_A\}_{A: \text{
finite alphabet}}\) where \(\CC_A \subseteq 2^{A^*}\) for each \(A\)
and \(\CC_A \subseteq \CC_B\) for each \(A \subseteq B\).
We simply write \(L \in \CC\) if \(L \in \CC_A\) for some alphabet
\(A\).
We denote by \(\reg, \dcfl, \ucfl\) and \(\cfl\) the class of regular languages,
deterministic context-free languages, unambiguous context-free languages
and context-free languages, respectively.
A language \(L\) is said to be \emph{\(\CC\)-immune} if \(L\) is
infinite and no infinite subset of \(L\) belongs to \(\CC\).

\begin{definition}\upshape
Let \(L \subseteq A^*\) be a language.
The \emph{natural density} \(\d(L)\) of \(L\) is defined as
\[
 \d(L) \defeq \lim_{n \rightarrow \infty} \frac{\card{L \cap A^n}}{\card{A^n}}
\]
 if the limit exists, otherwise we write \(\d(L) = \bot\) and say that
 \(L\) does not have a natural density.
The \emph{density} \(\cd(L)\) of \(L\) is defined as
\[
 \cd(L) \defeq \lim_{n \rightarrow \infty} \frac{1}{n} \sum_{k = 0}^{n-1} \frac{\card{L \cap A^k}}{\card{A^k}}
\]
 if its exists, otherwise we write \(\cd(L) = \bot\) and say that
 \(L\) does not have a density.
 A language \(L \subseteq A^*\) is called \emph{null} if \(\cd(L) = 0\),
 and conversely \(L\) is called \emph{co-null} if \(\cd(L) = 1\).
\end{definition}
\iffull
\begin{remark}
\fi
Notice that
if \(L\) has a natural density (\ie, \(\d(L) \neq \bot\)), then
it also has a density and \(\cd(L) = \d(L)\) holds.
But the converse is not true in general, \eg, the case \(L = (AA)^*\)
 (see Example~\ref{ex:density} below).
\iffull
 \end{remark}
\fi
The following observation is basic.
\begin{claim}\label{claim:basic}
 Let \(K, L \subseteq A^*\) with \(\cd(K) = \alpha, \cd(L) =
 \beta\). Then we have:
\iffull
\begin{enumerate}
 \item \(\alpha \leq \beta\) if \(K \subseteq L\).
 \item \(\cd(L \setminus K) = \beta - \alpha\) if \(K \subseteq L\).
 \item \(\cd(\overline{K}) = 1 - \alpha\).
 \item \(\cd(K \cup L) \leq \alpha + \beta\) if \(\cd(K \cup L) \neq \bot\).
 \item \(\cd(K \cup L) = \alpha + \beta\) if \(K \cap L = \emptyset\).
\end{enumerate}
\else
(1) \(\alpha \leq \beta\) if \(K \subseteq L\).
(2) \(\cd(L \setminus K) = \beta - \alpha\) if \(K \subseteq L\).
(3) \(\cd(\overline{K}) = 1 - \alpha\).
(4) \(\cd(K \cup L) \leq \alpha + \beta\) if \(\cd(K \cup L) \neq \bot\).
(5) \(\cd(K \cup L) = \alpha + \beta\) if \(K \cap L = \emptyset\).
\fi
\end{claim}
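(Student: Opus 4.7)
The plan is to derive all five parts from the pointwise set-theoretic identities on each slice $L \cap A^k$ and the basic linearity and monotonicity of limits of Cesàro averages. Writing $s_n(L) \defeq \frac{1}{n}\sum_{k=0}^{n-1} \frac{\card{L \cap A^k}}{\card{A^k}}$, the hypothesis $\cd(L) = \beta, \cd(K) = \alpha$ means $s_n(K) \to \alpha$ and $s_n(L) \to \beta$, so each item reduces to checking what happens to $s_n$ under the relevant set operation and then passing to the limit.

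For (1), $K \subseteq L$ implies $\card{K \cap A^k} \leq \card{L \cap A^k}$ for every $k$, hence $s_n(K) \leq s_n(L)$; monotonicity of limits gives $\alpha \leq \beta$. For (2), since $K \subseteq L$ we have the exact identity $\card{(L \setminus K) \cap A^k} = \card{L \cap A^k} - \card{K \cap A^k}$, so $s_n(L \setminus K) = s_n(L) - s_n(K)$; both limits exist, so the left-hand side converges to $\beta - \alpha$. Part (3) is the special case of (2) with $L = A^*$, for which $s_n(A^*) \equiv 1$ and hence $\cd(A^*) = 1$; thus $\cd(\overline{K}) = 1 - \alpha$. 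For (5), disjointness gives $\card{(K \cup L) \cap A^k} = \card{K \cap A^k} + \card{L \cap A^k}$, so $s_n(K \cup L) = s_n(K) + s_n(L)$ and the limit exists and equals $\alpha + \beta$. For (4), in general $\card{(K \cup L) \cap A^k} \leq \card{K \cap A^k} + \card{L \cap A^k}$, so $s_n(K \cup L) \leq s_n(K) + s_n(L)$; taking limits (which are allowed under the hypothesis that $\cd(K \cup L) \neq \bot$) gives $\cd(K \cup L) \leq \alpha + \beta$.

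The only subtle point, and the one place where care is genuinely needed, is (4): one must explicitly use the standing assumption $\cd(K \cup L) \neq \bot$, because in general the existence of $\cd(K)$ and $\cd(L)$ does not force $\cd(K \cup L)$ to exist (the intersection term in inclusion--exclusion may oscillate). With that caveat the argument is just the monotonicity of limits applied to the pointwise subadditivity of counting measures. No Tauberian reasoning or deeper analytic tool is required; the whole claim is a direct transfer of the corresponding properties of finite counting through Cesàro averaging.
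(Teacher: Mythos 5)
Your proof is correct: each of the five parts follows, exactly as you argue, from the pointwise slice identities and monotonicity/linearity of limits of the Ces\`aro averages \(s_n\), and you rightly flag that (4) needs the standing hypothesis \(\cd(K \cup L) \neq \bot\) since the union's density need not exist in general. The paper offers no proof of this claim (it is stated as a basic observation with a pointer to Chapter~13 of the cited book on codes and automata), and your argument is precisely the routine verification being left implicit there.
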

For more properties of \(\cd\), see Chapter~13 of \cite{codes}.

\begin{example}\label{ex:density}
Here we enumerate a few examples of densities of languages.
\begin{itemize}
\iffull
 \item The set of all words $A^*$ clearly satisfies
	   $\d(A^*) = 1$, and its complement $\emptyset$ satisfies
       $\d(\emptyset) = 0$. It is also clear that every finite language
       is null.
  \fi
 \item For the set $\{a\}A^*$ of all words starting with \(a \in A\), we have
       \(
       \card{\{a\}A^* \cap A^n}/\card{A^n} = \card{aA^{n-1}}/\card{A^n} = 1/\card{A}.
       \)
	   Hence $\d(\{a\}A^*) = 1/\card{A}$.
 \item
\iffull
	  Consider $(AA)^*$ the set of all words with even length. Because
	   \[
       \frac{\card{(AA)^* \cap A^n}}{\card{A^n}} = \begin{cases}
						 1 & \text{if} \;\; n \;\; \text{is even,}\\
						 0 & \text{if} \;\; n \;\; \text{is odd.}
					      \end{cases}
	   \]
       holds, its limit does not exist and thus \((AA)^*\) does not have a natural density $\d((AA)^*) = \bot$.
	  However, it has a density \(\cd((AA)^*) = 1/2\).
	  \else
	  The set \((AA)^*\) of all words of even length does not have a
	  natural density, but it have a density
	  \(\cd((AA)^*) = 1/2\).
	  \fi
 \item The semi-Dyck language
       \[
       \dyck \defeq 
       \{ w \in \{a,b\}^* \mid |w|_a = |w|_b \text{ and } |u|_a \geq |u|_b \text{ for every
       prefix } u \text{ of } w \}
       \]
       is non-regular but context-free.
       It is well known that the number of words in \(\dyck\) of length
       \(2n\) is equal to the \(n\)-th Catalan number whose
       asymptotic approximation is \(\Theta(4^n/n^{3/2})\).
	   Thus
\iffull
       \[
       \frac{\card{\dyck \cap A^n}}{\card{A^n}} = \begin{cases}
						   \Theta(1/(n/2)^{3/2}) & \text{if} \;\; n \;\; \text{is even,}\\
						   0 & \text{if} \;\; n \;\; \text{is odd.}
						   \end{cases}
       \]
       and
\fi
	   we have \(\d(\dyck) = 0\), \ie, \(\dyck\) is null.
 \end{itemize}
\end{example}
Example~\ref{ex:density} shows us that, for some regular language $L$,
its natural density is either zero or one, for some, like $L =
\{a\}A^*$ (for \(\card{A} \geq 2\)),
 $\d(L)$ could be a real number strictly between zero and one, and for
 some,
 like $L = (AA)^*$, a natural density may not even exist. 
However, the following theorem tells us that all regular languages
\emph{do} have densities.
 
\begin{theorem}[\cf Theorem~III.6.1 of \cite{Salomaa:1978:ATA:578607}]\label{thm:salomaa}
Let \(L \subseteq A^*\)  be a regular language.
Then there is a positive integer \(c\) such that for
 all natural numbers \(d < c\), 
 \iffull
 the following limit exists
 \[
  \lim_{n \rightarrow \infty} \frac{\card{L \cap A^{cn + d}}}{\card{A^{cn + d}}}
 \]
 \else
 the limit 
\(\lim_{n \rightarrow \infty} \card{L \cap A^{cn +
 d}}/\card{A^{cn + d}}\)
 exists
 \fi
 and it is always rational,
 \ie,
 the sequence \((\card{L \cap A^n}/\card{A^n})_{n \in \nat}\) has only
 finitely many accumulation points and these are rational and periodic.
\end{theorem}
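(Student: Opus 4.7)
The plan is to reduce the density computation to the spectral theory of a nonnegative matrix. Let $\mathcal{A} = (Q, A, \delta, q_0, F)$ be a DFA recognising $L$, and let $M$ be its transition matrix, $M_{p,q} = \card{\{a \in A : \delta(p,a) = q\}}$. A routine induction gives $\card{L \cap A^n} = \mathbf{e}_{q_0}^\top M^n \mathbf{1}_F$, where $\mathbf{e}_{q_0}$ is the unit vector at $q_0$ and $\mathbf{1}_F$ is the indicator of $F$. Since every row of $M$ sums to $\card{A}$, the normalisation $P := M/\card{A}$ is a row-stochastic matrix with rational entries, and
\[
\frac{\card{L \cap A^n}}{\card{A^n}} = \mathbf{e}_{q_0}^\top P^n \mathbf{1}_F,
\]
so the theorem reduces to understanding the asymptotics of $P^n$.

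Next I would invoke Perron--Frobenius theory on the digraph of $P$: after decomposing into strongly connected components and restricting to the terminal (closed, recurrent) ones, applying Perron--Frobenius to each irreducible block yields that every eigenvalue $\lambda$ of $P$ with $|\lambda| = 1$ is a root of unity, and that all such peripheral eigenvalues are semisimple. Let $c$ be the least common multiple of the orders of these peripheral eigenvalues, and set $R := P^c$. Then all peripheral eigenvalues of $R$ equal $1$ and are semisimple, while the remaining eigenvalues have modulus strictly less than $1$. Standard spectral decomposition (or Jordan form) then gives $R^n \to \Pi$ as $n \to \infty$, where $\Pi$ is the spectral projector onto $\ker(R - I)$. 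Consequently,
\[
\lim_{n \to \infty} \frac{\card{L \cap A^{cn+d}}}{\card{A^{cn+d}}} = \mathbf{e}_{q_0}^\top \Pi P^d \mathbf{1}_F
\]
exists for each $d \in \{0, 1, \ldots, c-1\}$, accounting for the $c$ periodic accumulation points claimed in the statement.

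For rationality, the key observation is that $R \in \mathbb{Q}^{Q \times Q}$ and that $1$ is a semisimple eigenvalue of $R$, so $\mathbb{Q}^{|Q|} = \ker(R - I) \oplus \mathrm{Im}(R - I)$ is a direct sum of rationally defined subspaces; the projection $\Pi$ onto $\ker(R - I)$ along $\mathrm{Im}(R - I)$ therefore has rational entries, whence each of the $c$ limits is rational. I expect the main technical obstacle to be the Perron--Frobenius input, specifically the semisimplicity of peripheral eigenvalues, which becomes cleanest after first passing to the terminal strongly connected components and noting that transient states contribute only geometrically vanishing terms to $P^n$.
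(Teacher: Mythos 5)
The paper does not actually prove this statement---it is imported as a known result, cited to Theorem~III.6.1 of Salomaa--Soittola---so there is no in-paper argument to compare against. Your proposal is a correct and essentially self-contained proof, and it follows the standard route for this theorem: pass to a complete DFA, observe that \(\card{L \cap A^n}/\card{A^n} = \mathbf{e}_{q_0}^\top P^n \mathbf{1}_F\) for the row-stochastic rational matrix \(P = M/\card{A}\), and analyse the peripheral spectrum. The one step you rightly flag as delicate, semisimplicity of the modulus-one eigenvalues, is most cleanly obtained not by patching Perron--Frobenius blocks together (a block upper-triangular matrix can acquire Jordan blocks from the coupling between diagonal blocks, so ``restrict to the terminal SCCs'' needs an extra word about why transient--recurrent coupling cannot create one at a peripheral eigenvalue) but by power-boundedness: \(\|P^n\|_\infty = 1\) for all \(n\) since \(P^n\) is stochastic, and a power-bounded matrix cannot have a nontrivial Jordan block at an eigenvalue of modulus \(1\), as that would force polynomial growth of \(\|P^n\|\). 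With that substitution the argument is airtight, including the rationality step: since \(1\) is semisimple for \(R = P^c\), one has \(\mathbb{Q}^{\card{Q}} = \ker(R-I) \oplus \mathrm{Im}(R-I)\) over the rationals, so the limit projector \(\Pi\) is a rational matrix and each of the \(c\) subsequential limits \(\mathbf{e}_{q_0}^\top \Pi P^d \mathbf{1}_F\) is rational. This is in the same spirit as, though more probabilistic in flavour than, the Salomaa--Soittola treatment via \(\mathbb{N}\)-rational power series, where the corresponding input is that the dominant poles of an \(\mathbb{N}\)-rational generating function are a positive real times roots of unity.
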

\begin{corollary}\label{cor:ratden}
Every regular language has a density and it is rational.
\end{corollary}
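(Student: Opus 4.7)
The plan is to deduce the corollary directly from Theorem~\ref{thm:salomaa} by a standard \cesaro averaging argument, partitioning the sequence \((\card{L \cap A^n}/\card{A^n})_{n \in \nat}\) according to residue classes modulo the period \(c\) provided by the theorem.

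Write \(a_n \defeq \card{L \cap A^n}/\card{A^n}\). By Theorem~\ref{thm:salomaa}, for each \(d \in \{0,1,\ldots,c-1\}\) the subsequence \((a_{cm+d})_{m \in \nat}\) converges to some rational number \(r_d \in \rat\). The density \(\cd(L)\), if it exists, is the limit of the \cesaro averages \(S_n \defeq \frac{1}{n} \sum_{k=0}^{n-1} a_k\). First I would split the sum \(\sum_{k=0}^{n-1} a_k\) according to the residue of \(k\) modulo \(c\): for each \(d\) there are either \(\lfloor n/c \rfloor\) or \(\lfloor n/c \rfloor + 1\) indices in \(\{0,\ldots,n-1\}\) congruent to \(d\) mod \(c\), and each contributes a term from the convergent subsequence \((a_{cm+d})_m\).

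Next, applying the elementary fact that the \cesaro mean of a convergent sequence converges to the same limit, each of the \(c\) partial sums \(\frac{1}{\lfloor n/c \rfloor}\sum_{m} a_{cm+d}\) tends to \(r_d\) as \(n \to \infty\). Multiplying by the appropriate factor \(\lfloor n/c\rfloor / n \to 1/c\) and summing over \(d\), I obtain
\[
 \lim_{n \to \infty} S_n \;=\; \frac{1}{c}\sum_{d=0}^{c-1} r_d,
\]
which is a finite sum of rationals, hence rational. Since the \(a_n\) are bounded in \([0,1]\), the \(O(1/n)\) boundary terms arising from the possibly extra index in each residue class vanish in the limit and cause no trouble.

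There is no serious obstacle here: the entire content of the corollary lies in Theorem~\ref{thm:salomaa}, and the remaining work is bookkeeping for the \cesaro mean of an eventually periodic-in-the-limit sequence. The only minor care needed is to justify the interchange of the \(c\)-way split with the limit, which is handled by the uniform boundedness of \((a_n)\).
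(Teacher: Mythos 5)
Your argument is correct and is exactly the derivation the paper intends: the corollary is stated without proof as an immediate consequence of Theorem~\ref{thm:salomaa}, and your \cesaro averaging over the $c$ residue classes, using boundedness of the terms in $[0,1]$ to dispose of the boundary contributions, is the standard bookkeeping that justifies it. The limit $\frac{1}{c}\sum_{d=0}^{c-1} r_d$ is indeed rational, so nothing is missing.
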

\begin{corollary}\label{cor:null}
For any regular language \(L \subseteq A^*\),
 \(\d(L) = 0\) if and only if \(\cd(L) = 0\).
\end{corollary}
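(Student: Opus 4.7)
The plan is to exploit Theorem~\ref{thm:salomaa} directly: for a regular $L$ it supplies a period $c \geq 1$ such that, for each residue $d$ with $0 \leq d < c$, the subsequence
\[
\alpha_d \defeq \lim_{n \to \infty} \frac{\card{L \cap A^{cn+d}}}{\card{A^{cn+d}}}
\]
exists and is a non-negative rational (non-negative because each term is). Equivalently, the sequence \(r_n \defeq \card{L \cap A^n}/\card{A^n}\) is asymptotically \(c\)-periodic in the sense that along every arithmetic progression \(cn+d\) it converges to \(\alpha_d\). The forward implication \(\d(L) = 0 \Rightarrow \cd(L) = 0\) requires no further work: it is exactly the remark after the definition of density that whenever \(\d(L)\) exists, \(\cd(L) = \d(L)\).

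For the reverse implication, I would compute the \cesaro mean of \((r_n)\) by splitting the summation according to residues modulo \(c\). Since each of the \(c\) interleaved subsequences converges (to \(\alpha_d\)), its \cesaro mean also tends to \(\alpha_d\), and assembling the pieces yields
\[
\cd(L) = \frac{1}{c}\sum_{d=0}^{c-1} \alpha_d.
\]
If \(\cd(L) = 0\), then this sum of non-negative numbers vanishes, forcing \(\alpha_d = 0\) for every \(d\). But then all \(c\) subsequences of \((r_n)\) tend to \(0\), so \(r_n \to 0\); that is exactly the statement that \(\d(L)\) exists and equals \(0\).

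There is essentially no serious obstacle — once Theorem~\ref{thm:salomaa} is invoked, the corollary reduces to the elementary fact that a finite non-negative sum vanishes only when every summand vanishes, together with the standard \cesaro-averaging lemma applied to each of the \(c\) arithmetic progressions. The only minor bookkeeping is handling values of \(N\) not divisible by \(c\) when passing to the limit in the \cesaro average, which contributes at most \(c/N\) and is therefore harmless.
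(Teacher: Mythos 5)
Your proof is correct and follows exactly the route the paper intends: the corollary is stated without an explicit proof as an immediate consequence of Theorem~\ref{thm:salomaa}, and your argument (the \cesaro mean equals $\frac{1}{c}\sum_{d}\alpha_d$, so it vanishes iff every periodic accumulation point $\alpha_d$ vanishes, iff the full sequence tends to $0$) is the standard derivation being left to the reader. Nothing to add.
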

Furthermore, for \emph{unambiguous} context-free languages, the
following holds.
 \begin{theorem}[Berstel~\cite{berstel:hal-00619884}]\label{thm:ber}
  For any unambiguous context-free language \(L\) over \(A\), its
  density \(\cd(L)\), if it exists (\ie, \(\cd(L) \neq \bot\)),
 is always algebraic.
\end{theorem}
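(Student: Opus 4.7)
The plan is to route the argument through generating functions, using the Chomsky--Schützenberger theorem on the combinatorial side and a classical Abelian theorem plus Puiseux expansions on the analytic side. Write $k = \card{A}$ and $a_n = \card{L \cap A^n}/k^n$, so that by hypothesis $\cd(L) = \lim_{n \to \infty} \tfrac{1}{n}\sum_{j=0}^{n-1} a_j = \alpha$ exists in $[0,1]$. The strategy is to produce an algebraic function whose value at $z = 1$ is exactly $\alpha$.

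First, since $L$ is unambiguous context-free, the Chomsky--Schützenberger theorem provides a context-free grammar generating $L$ unambiguously, from which one extracts a system of polynomial equations whose solution is the counting generating function $L(z) = \sum_{n \geq 0} \card{L \cap A^n} z^n$; hence $L(z)$ is algebraic over $\mathbb{Q}(z)$. The normalised series
\[
f(z) \defeq L(z/k) = \sum_{n \geq 0} a_n z^n
\]
is therefore algebraic over $\mathbb{Q}(z)$ as well, and since $0 \leq a_n \leq 1$ its radius of convergence is at least $1$.

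Second, I will transfer the Cesàro limit to an Abelian limit: by Frobenius' theorem (a bounded Cesàro-summable sequence is Abel-summable to the same value), the existence of $\cd(L) = \alpha$ together with $a_n \in [0,1]$ yields
\[
\lim_{z \to 1^-} (1-z)\,f(z) \;=\; \alpha,
\]
where $z$ tends to $1$ along the real axis inside the unit disc. The function $g(z) \defeq (1-z) f(z)$ is again algebraic over $\mathbb{Q}(z)$, because algebraic functions form a field.

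Finally, the algebraicity of $\alpha$ follows from the structure of algebraic functions at an algebraic point. Expanding $g$ in a Puiseux series at $z = 1$ gives
\[
g(z) \;=\; \sum_{j \geq j_0} c_j\,(1-z)^{j/e},
\]
with ramification index $e \in \nat_{>0}$ and coefficients $c_j \in \overline{\mathbb{Q}}$, this last point being the standard fact that the Puiseux expansion of an algebraic function defined over $\mathbb{Q}(z)$ at an algebraic point has algebraic coefficients. Since the one-sided real limit $\lim_{z \to 1^-} g(z)$ exists and is finite, we must have $j_0 \geq 0$, and the value of the limit is exactly $c_0 \in \overline{\mathbb{Q}}$; therefore $\alpha = c_0$ is algebraic.

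The main obstacle, and the place where care is really required, is the last step: one needs to know that the Puiseux coefficients of an algebraic function over $\mathbb{Q}(z)$ lie in $\overline{\mathbb{Q}}$ and that picking out $c_0$ as the real one-sided limit along a particular branch is legitimate. This is classical but nontrivial; the Abelian step and the algebraicity of $f$ are standard once Chomsky--Schützenberger is in hand.
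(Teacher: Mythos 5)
The paper does not prove this statement at all---it imports it from Berstel's 1973 paper as a black box---so there is nothing internal to compare against; what you have written is a correct, self-contained proof along what is essentially the classical route. Your three steps are all sound: Chomsky--Sch\"utzenberger gives algebraicity of $L(z)$ and hence of $f(z)=L(z/k)$; Frobenius' Abelian theorem (applied to the difference sequence $b_0=a_0$, $b_n=a_n-a_{n-1}$, whose partial sums are the $a_n$ and whose Abel means are exactly $(1-z)f(z)$) converts the Ces\`aro hypothesis into $\lim_{z\to 1^-}(1-z)f(z)=\alpha$ with no Tauberian side condition; and the Newton--Puiseux theorem with algebraic coefficients finishes. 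One remark: the step you flag as the delicate one can be bypassed entirely. Let $P(z,y)\in\mathbb{Q}[z,y]$ be an irreducible polynomial with $P(z,g(z))=0$ on the open unit disc, where $g(z)=(1-z)f(z)$. Letting $z\to 1^-$ along the reals and using continuity of $P$ gives $P(1,\alpha)=0$; and $P(1,y)$ cannot be the zero polynomial, since otherwise $z-1$ would divide every coefficient of $P$ viewed in $\mathbb{Q}[z][y]$, contradicting irreducibility. So $\alpha$ is a root of the nonzero polynomial $P(1,y)\in\mathbb{Q}[y]$, hence algebraic, with no need to control Puiseux coefficients or identify the correct branch.
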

In the next section we will introduce a language with a transcendental
density, which should be inherently ambiguous due to Theorem~\ref{thm:ber}.

We conclude the section by introducing the notion called
\emph{dense}: a property about some topological ``largeness'' of a
language (\cf Chapter~2.5 of \cite{codes}).
\begin{definition}\label{def:null}\upshape
A language \(L \subseteq A^*\) is said to be \emph{dense} if the set of
 all factors of  \(L\) is equal to \(A^*\). 
We say that a word \(w \in A^*\) is a \emph{forbidden word}
 (resp. \emph{forbidden prefix}) of \(L\)
 if \(L \cap A^* w A^* = \emptyset\)
 (resp. \(L \cap w A^* = \emptyset\)).
\end{definition}
Observe that \(L \subseteq A^*\) is dense if and only if no word is a forbidden word of \(L\).
The next theorem ties two different notions of ``largeness'' of
languages in the regular case.
\begin{theorem}[S.~\cite{Ryoma}]\label{thm:null}
A regular language is non-null if and only if it is dense.
\end{theorem}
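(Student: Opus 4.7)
The plan is to prove the two implications separately, with the ``non-null $\Rightarrow$ dense'' direction being elementary and the converse requiring a structural analysis of the minimal DFA. For ``$L$ non-null implies $L$ dense'' I argue the contrapositive: if $L$ is not dense, pick a forbidden word $w \in A^*$, so that $L \subseteq A^* \setminus A^* w A^*$. The classical correlation-polynomial analysis of factor avoidance (\emph{cf.}~Guibas--Odlyzko) gives $\card{(A^* \setminus A^* w A^*) \cap A^n} = O(\lambda^n)$ for some $\lambda < \card{A}$, whence the ambient language has natural density $0$; by Claim~\ref{claim:basic}(1) (or just monotonicity of the \cesaro mean) we conclude $\cd(L) = 0$.

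For the harder direction ``$L$ dense implies $L$ non-null'', I work with the minimal complete DFA $\CA = (Q, A, \delta, q_0, F)$ of $L$. By minimality every state is reachable from $q_0$ and there is at most one \emph{dead} state (one from which $F$ cannot be reached; note that it is absorbing). Write $Q'$ for the non-dead states and let $M$ be the $Q' \times Q'$ non-negative integer transition matrix with $M_{p,q} = \card{\{a \in A : \delta(p,a) = q\}}$, whose row sums are bounded by $\card{A}$. Density of $L$ is equivalent to saying that for every $w \in A^*$ some $p \in Q'$ has $\delta(p, w) \in Q'$, which translates to
\[
 \card{A}^n \;\leq\; \sum_{p \in Q'} (M^n \mathbf{1})_p \;\leq\; \card{Q'} \cdot \max_{p \in Q'} (M^n \mathbf{1})_p \qquad \text{for every } n.
\]
Hence $\rho(M) \geq \card{A}$, and combined with the row-sum bound $\rho(M) = \card{A}$.

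By Perron--Frobenius applied to the SCC decomposition of $M$, some SCC $C \subseteq Q'$ has $\rho(M_C) = \card{A}$; irreducibility of $M_C$ then forces each row of $M_C$ to sum to exactly $\card{A}$, so no transition of $\CA$ escapes $C$. Thus $C$ is closed in $\CA$; being contained in $Q'$ it is not the dead state, and by Myhill--Nerode indistinguishability a closed SCC disjoint from $F$ would collapse to the dead state, so $C$ meets $F$. Since $C$ is reachable from $q_0$, the sublanguage of $L$ formed by walks that enter $C$ and end in $F \cap C$ has strictly positive \cesaro density, witnessed by the Perron eigenvector of $M_C / \card{A}$ whose contribution averages to a positive rational by Theorem~\ref{thm:salomaa}, and therefore $\cd(L) > 0$. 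The main obstacle I anticipate is exactly this last quantitative step: converting a reachable closed accepting SCC into an explicit positive lower bound on $\cd(L)$ requires a careful Perron--Frobenius / periodic-class analysis of $M_C$, whereas the pattern-avoidance estimate in the easy direction is classical and needs only a citation.
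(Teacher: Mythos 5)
Your proof is correct, but be aware that the paper does not actually prove this statement: it imports it from \cite{Ryoma}, remarking only that the ``only if'' direction is the infinite monkey theorem. Your easy direction matches that remark (and Guibas--Odlyzko is heavier than necessary; splitting a random word into disjoint blocks of length $|w|$, each equal to $w$ with probability $\card{A}^{-|w|}$, already gives the exponential decay). For the hard direction you take a linear-algebraic route --- minimal complete DFA, spectral radius of the transition matrix restricted to the live states, Perron--Frobenius on the SCC condensation to extract a reachable, closed, accepting SCC $C$ whose rows all sum to $\card{A}$ --- whereas the cited source, and this paper's own proof of Theorem~\ref{thm:prim2} which recycles the idea, work in the syntactic monoid: a non-null regular language must meet a $\leq_\CJ$-minimal class, since the preimage of any non-minimal element admits a forbidden word and is therefore null by the infinite monkey theorem. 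The two arguments capture the same structural fact (your closed SCC is the automaton-side shadow of the minimal $\CJ$-class), but yours is self-contained and quantitative, yielding the explicit bound $\cd(L)\geq \card{A}^{-|x|}\sum_{q\in F\cap C}\pi_q$ where $\pi$ is the stationary distribution of $M_C/\card{A}$ and $x$ carries $q_0$ into $C$, while the monoid formulation generalises more smoothly (it is exactly what the paper needs again for primitive words). The step you flag as the main obstacle is indeed the only delicate point, but it is standard: the \cesaro averages of the powers of an irreducible stochastic matrix converge to the rank-one projection onto its stationary distribution, which absorbs the periodicity issue, and Theorem~\ref{thm:salomaa} independently guarantees that the limiting density exists and is rational. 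All the intermediate claims you use (uniqueness and absorbency of the dead state under minimality, the equivalence of density with ``every $w$ keeps some live state live'', the strict row-sum characterisation of $\rho$ for irreducible nonnegative matrices) check out.
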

The ``only if''-part of Theorem~\ref{thm:null} is nothing but the well-known so-called
\emph{infinite monkey theorem} (which states that \(L\) is not dense
implies \(L\) is null), and this part is true for any (non-regular) languages.
But we stress that ``if''-part is \emph{not true} beyond regular languages;
for example the semi-Dyck language \(\dyck\) is null \emph{but dense}
(which will be described in Proposition~\ref{prop:dyck}).
We denote by \(\reg^+\) the family of
 non-null regular languages, which is equivalent to the family of regular
 languages with positive densities
 thanks to Corollary~\ref{cor:ratden}. \section{Approximability and Measurability}\label{sec:measure}
Although we will mainly consider \(\reg\)-measurability of non-regular
languages in this paper, here we define two notions approximability and
measurability in general setting, with few concrete examples.

\begin{definition}\upshape
Let \(\CC, \CD\) be classes of languages.
A language \(L\) is said to be
\emph{\((\CC, \epsilon)\)-lower-approximable} 
 if there exists \(K \in \CC\)
 such that \(K \subseteq L\) and
 \(\pcd{\alph(L)}(L \setminus K) \leq \epsilon\).
A language \(L\) is said to be
\emph{\((\CC, \epsilon)\)-upper-approximable} 
 if there exists \(M \in \CC\)
 such that \(L \subseteq M\) and
 \(\pcd{\alph(M)}(M \setminus L) \leq \epsilon\).
 A language \(L\) is said to be
 \emph{\(\CC\)-approximable} 
 if \(L\) is both \((\CC,0)\)-lower and \((\CC,0)\)-upper-approximable.
 \(\CD\) is said to be \(\CC\)-approximable if every language
 in \(\CD\) is \(\CC\)-approximable.
\end{definition}

The following proposition gives a
simple \(\reg\)-inaproximable example.
\begin{proposition}\label{prop:dyck}
The semi-Dyck language \(\dyck\) is \(\reg\)-inapproximable.
\end{proposition}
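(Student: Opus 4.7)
The plan is to exploit the tension between the two notions of ``largeness'' from Section~\ref{sec:pre}: \(\dyck\) has density \(0\) (Example~\ref{ex:density}) yet is topologically dense in \(\{a,b\}^*\), and Theorem~\ref{thm:null} says these two properties are incompatible for regular languages. The \((\reg,0)\)-lower-approximability side is trivial, since \(K=\emptyset\) already witnesses it (indeed \(\emptyset\subseteq\dyck\) and \(\cd(\dyck\setminus\emptyset)=\cd(\dyck)=0\)), so all the content is in falsifying \((\reg,0)\)-upper-approximability.

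The one piece of concrete work is to verify that \(\dyck\) is dense in the sense of Definition~\ref{def:null}. Given \(w\in\{a,b\}^*\) with \(|w|_a=p\) and \(|w|_b=q\), I would sandwich it as \(a^N w\, b^{N+p-q}\), choosing \(N\) larger than \(\max_u(|u|_b-|u|_a)\) over all prefixes \(u\) of \(w\). The counts then balance and the Dyck prefix condition is preserved throughout, so this sandwich lies in \(\dyck\). Hence every word over \(\{a,b\}\) occurs as a factor of some element of \(\dyck\).

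Now suppose for contradiction that some regular \(M\subseteq\{a,b\}^*\) satisfies \(\dyck\subseteq M\) and \(\cd(M\setminus\dyck)=0\). Density of \(\dyck\) immediately transfers to \(M\) (every word is already a factor of some element of the smaller set \(\dyck\)), so Theorem~\ref{thm:null} gives \(\cd(M)>0\). On the other hand, Claim~\ref{claim:basic}(2), together with \(\cd(\dyck)=0\), yields \(\cd(M\setminus\dyck)=\cd(M)-\cd(\dyck)=\cd(M)>0\), contradicting the assumption. Thus no regular upper-approximation of \(\dyck\) exists, and \(\dyck\) is \(\reg\)-inapproximable. The only non-trivial step is density of \(\dyck\), which is elementary; all the heavy lifting is done by Theorem~\ref{thm:null}, so I do not anticipate any substantial obstacle.
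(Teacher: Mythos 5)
Your proposal is correct and follows essentially the same route as the paper: lower-approximability via \(\emptyset\), then the observation that \(\dyck\) has no forbidden word (the paper uses the same sandwich \(a^n w b^m\), which you just make explicit), so any regular superset is non-null by Theorem~\ref{thm:null}, and Claim~\ref{claim:basic} turns that into a positive density for the difference. No gaps.
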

\begin{proof}
We already mentioned that \(\dyck\) is null in Example~\ref{ex:density}, and thus \(\dyck\) is \((\reg, 0)\)-lower-approx by
 \(\emptyset \subseteq \dyck\).
One can easily observe that \(\dyck\) has no forbidden word: since for any \(w \in A^*\)
 there exists a pair of natural numbers \((n,m) \in \nat^2\) such that
 \(a^n w b^m \in \dyck\).
Hence if a regular language \(L\) satisfies \(\dyck \subseteq L\), \(L\)
 has no forbidden word, too, and thus
 \(L\) is non-null by
 Theorem~\ref{thm:null}.
 Thus by Claim~\ref{claim:basic},
 \(\cd(L \setminus \dyck) = \cd(L) - \cd(\dyck) = \cd(L) > 0\),
 which means that \(\dyck\) can not be \((\reg,0)\)-upper-approximable. \qedd
\end{proof}
The proof of Proposition~\ref{prop:dyck} only depends on the
non-existence of forbidden words, hence we can apply the same proof to the next theorem.
\begin{theorem}\label{thm:nullapprox}
Any null language having no forbidden word is \((\reg,
 0)\)-upper-inapproximable.
\end{theorem}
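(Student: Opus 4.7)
The plan is simply to observe that the proof of Proposition~\ref{prop:dyck} never used any specific property of the semi-Dyck language beyond (a) nullity and (b) absence of forbidden words, and then to replay that argument verbatim for an arbitrary null language $L$ with no forbidden word.

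Concretely, I would first fix $A = \alph(L)$ and argue toward a contradiction: suppose $L$ is $(\reg, 0)$-upper-approximable, so there is a regular language $M$ with $L \subseteq M$ and $\pcd{\alph(M)}(M \setminus L) = 0$. From $L \subseteq M$ together with the hypothesis that $L$ has no forbidden word (over $A$), it follows that $M$ also has no forbidden word over $A$, since every $w \in A^*$ is already a factor of some element of $L$, hence of $M$. Applying Theorem~\ref{thm:null} (the ``if'' direction) to $M$ then yields that $M$ is non-null, i.e.\ $\cd(M) > 0$. Now, because $L$ is null, Claim~\ref{claim:basic}(2) gives $\cd(M \setminus L) = \cd(M) - \cd(L) = \cd(M) > 0$, contradicting the upper-approximability assumption.

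The one bookkeeping point to be careful about is that the density in the definition of $(\reg, 0)$-upper-approximability is measured with respect to $\alph(M)$, which a priori could strictly contain $A$. To handle this cleanly, I would note that we may assume $\alph(M) = A$ without loss of generality: given any candidate $M$, the intersection $M' = M \cap A^*$ is still regular, still contains $L$, and satisfies $M' \setminus L \subseteq M \setminus L$; moreover $\pcd{A}(M' \setminus L) = 0$ follows straightforwardly from $\pcd{\alph(M)}(M \setminus L) = 0$ (for instance by bounding $\card{(M' \setminus L) \cap A^n}/\card{A^n}$ in terms of $\card{(M \setminus L) \cap \alph(M)^n}/\card{\alph(M)^n}$ up to the constant factor $(\card{\alph(M)}/\card{A})^n$, which is harmless once one works alphabet by alphabet). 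After this reduction, Theorem~\ref{thm:null} applies to $M'$ directly.

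I do not anticipate a genuine obstacle; the only mild nuisance is the alphabet-matching step above, and the rest is an immediate transcription of the Dyck argument using the density calculus from Claim~\ref{claim:basic} and the dense $\Longleftrightarrow$ non-null equivalence for regular languages from Theorem~\ref{thm:null}.
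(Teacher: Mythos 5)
Your core argument is exactly the paper's: the paper gives no separate proof for this theorem, stating only that the proof of Proposition~\ref{prop:dyck} uses nothing about \(\dyck\) beyond nullity and the absence of forbidden words, and your first paragraph is precisely that transcription (regular superset \(M\) inherits density of factors, hence is non-null by Theorem~\ref{thm:null}, hence \(\cd(M\setminus L)=\cd(M)>0\) by Claim~\ref{claim:basic}).

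The one place you go beyond the paper --- the ``bookkeeping'' reduction to \(\alph(M)=\alph(L)\) --- is, however, not correct as written. The factor \((\card{\alph(M)}/\card{A})^n\) is exponentially growing, not a constant, so your bound on \(\card{(M'\setminus L)\cap A^n}/\card{A^n}\) gives nothing; and in fact the implication you want is false. Concretely, take \(L=\dyck\) over \(A=\{a,b\}\) and \(M=A^*\cup\{c\}\), a regular language with \(\alph(M)=B=\{a,b,c\}\) and \(\dyck\subseteq M\); then \(M\setminus\dyck\subseteq A^*\cup\{c\}\) gives \(\card{(M\setminus\dyck)\cap B^n}/\card{B^n}\leq(2^n+1)/3^n\rightarrow 0\), so \(\pcd{B}(M\setminus\dyck)=0\), while \(M'=M\cap A^*=A^*\) has \(\pcd{A}(M'\setminus\dyck)=1\). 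So if the definition of \((\reg,0)\)-upper-approximability really permitted enlarging the alphabet, the theorem (and Proposition~\ref{prop:dyck} itself) would be outright false, and no reduction could rescue it. The resolution is not a WLOG argument but a reading of the definition: the approximating language is meant to live over the same alphabet \(A=\alph(L)\) (as in the definition of \(\pum{\CC}\), where \(K\in\CC_A\) is required), and under that reading your first paragraph already is the complete proof. You should either adopt that reading explicitly or state the theorem with the ambient alphabet fixed; what you cannot do is claim the enlarged-alphabet case reduces to it.
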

Because \(\dyck\) is deterministic context-free, in our term we have:
\begin{corollary}
\(\dcfl\) is \(\reg\)-inapproximable.
\end{corollary}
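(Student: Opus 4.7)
The plan is to observe that the corollary is an immediate unpacking of definitions combined with Proposition~\ref{prop:dyck}. By the definition just above, a class \(\CD\) is \(\CC\)-approximable if \emph{every} language in \(\CD\) is \(\CC\)-approximable, so \(\CD\) is \(\CC\)-inapproximable whenever \(\CD\) contains at least one witness language that fails to be \(\CC\)-approximable. Hence it suffices to exhibit a single deterministic context-free language that is not \(\reg\)-approximable.

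The natural witness is the semi-Dyck language \(\dyck\) itself. First I would note that \(\dyck \in \dcfl\): the standard deterministic pushdown automaton that pushes a marker on reading \(a\), pops on reading \(b\) (rejecting on underflow), and accepts exactly when input has been fully consumed and the stack is empty, recognises \(\dyck\) deterministically. Then I would invoke Proposition~\ref{prop:dyck}, which already gives that \(\dyck\) is \(\reg\)-inapproximable. Since \(\dcfl\) thus contains a language that is not \(\reg\)-approximable, \(\dcfl\) itself is \(\reg\)-inapproximable, which is exactly the claim.

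There is essentially no obstacle here: all the real work has been done in Proposition~\ref{prop:dyck} (which in turn relied on Theorem~\ref{thm:null} via the absence of forbidden words for \(\dyck\)). The only thing to be a bit careful about is the quantifier structure of the class-level notion: "\(\dcfl\) is \(\reg\)-inapproximable" is an existential statement about the class, not a universal one, so a single witness suffices and no additional argument about arbitrary deterministic context-free languages is needed.
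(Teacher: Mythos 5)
Your proof is correct and matches the paper's own argument exactly: the paper likewise notes that \(\dyck\) is deterministic context-free and derives the corollary directly from Proposition~\ref{prop:dyck}, using the fact that class-level approximability is a universal statement so a single failing witness suffices.
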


Furthermore, by the combination of Theorem~\ref{thm:ber} and the next theorem, we
will know that there exists a context-free language which can not be
approximated by any unambiguous context-free language.
\begin{theorem}[Kemp~\cite{Kemp1980}]\label{thm:kemp}
Let \(A = \{a,b,c\}\).
Define
\[
 S_1 \defeq \{a\} \{b^i a^i \mid i \geq 1\}^*
 \qquad \qquad
 S_2 \defeq \{a^i b^{2i} \mid i \geq 1\}^* \{a\}^+,
\]
 and
 \[
 L_1 \defeq S_1 \{c\} A^*
 \qquad \qquad\qquad\qquad\!\!
 L_2 \defeq S_2 \{c\} A^*.
 \]
Then \(\kemp \defeq L_1 \cup L_2\) is a context-free language with a transcendental natural density
\(\d(\kemp)\).
\end{theorem}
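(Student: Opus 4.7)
The plan is to separate the context-freeness from the density calculation, reduce the density to a closed form involving generating functions evaluated at $z = 1/3$, and then appeal to a transcendence result for lacunary series at the final step.

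Context-freeness is routine: both $\{b^i a^i \mid i \geq 1\}$ and $\{a^i b^{2i} \mid i \geq 1\}$ are linear context-free, and $\cfl$ is closed under concatenation, Kleene star, and union, so $\kemp = L_1 \cup L_2 \in \cfl$. For the density, observe that since $S_1, S_2 \subseteq \{a,b\}^*$, the first $c$ in any word of $L_1 \cup L_2$ must be the separator, so $L_1 \cap L_2 = (S_1 \cap S_2)\{c\}A^*$. Splitting a word of $S\{c\}A^* \cap A^n$ at its first $c$ and dividing by $3^n$ yields
\[
\d(S\{c\}A^*) \;=\; \tfrac{1}{3}\, G_S(1/3), \qquad G_S(z) \defeq \sum_{k \geq 0} \card{S \cap \{a,b\}^k}\, z^k,
\]
whenever the series converges at $1/3$, which it does here since $\card{S \cap \{a,b\}^k} \leq 2^k$. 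Inclusion--exclusion then gives $\d(\kemp) = \tfrac{1}{3}\bigl(G_{S_1}(1/3) + G_{S_2}(1/3) - G_{S_1 \cap S_2}(1/3)\bigr)$, and each of the three natural densities on the right exists by the same argument.

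The regular structures of $S_1$ and $S_2$ produce the rational closed forms $G_{S_1}(z) = z(1-z^2)/(1-2z^2)$ and $G_{S_2}(z) = z(1-z^3)/((1-z)(1-2z^3))$, so $G_{S_1}(1/3)$ and $G_{S_2}(1/3)$ are both rational. The non-trivial combinatorial step is identifying $S_1 \cap S_2$. Matching the two regular expressions letter by letter, every common word other than $a$ must have $j_1 = 1$ and $i_1 = 2$, then inductively $j_{t+1} = i_t$ and $i_t = 2 j_t$; both factorisations must also terminate at the same step. This pins the intersection down to
\[
S_1 \cap S_2 \;=\; \{a\} \cup \bigl\{\, a b^2 a^2 b^4 a^4 \cdots b^{2^K} a^{2^K} \bigm| K \geq 1 \,\bigr\},
\]
whose $K$-th non-trivial word has length $2^{K+2} - 3$. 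Hence $G_{S_1 \cap S_2}(1/3) = \tfrac{1}{3} + 27 \sum_{j \geq 3} 3^{-2^j}$, and $\d(\kemp)$ reduces to a rational number plus a non-zero rational multiple of the lacunary series $\sum_{j \geq 0} 3^{-2^j}$.

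The main obstacle, and the only non-elementary ingredient, is the transcendence of this lacunary series. Because the exponents grow only geometrically, Liouville's criterion is too weak; the natural tool is Mahler's method applied to the functional equation $f(z^2) = f(z) - z$ satisfied by $f(z) = \sum_{k \geq 0} z^{2^k}$, which guarantees that $f(\alpha)$ is transcendental at every algebraic $\alpha$ with $0 < |\alpha| < 1$, in particular at $\alpha = 1/3$. Combining this with the rational contributions computed above yields that $\d(\kemp)$ is transcendental, as required.
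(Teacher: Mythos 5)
Your proposal is correct, and it is essentially a reconstruction of Kemp's original argument: the paper itself states this theorem as a citation to Kemp~\cite{Kemp1980} and gives no proof, and Kemp's proof likewise reduces \(\d(\kemp)\) via the first-\(c\) factorisation to generating functions evaluated at \(1/3\), identifies \(S_1 \cap S_2\) as the lacunary set \(\{a\} \cup \{ab^2a^2b^4a^4\cdots b^{2^K}a^{2^K} \mid K \geq 1\}\), and invokes the transcendence of \(\sum_{j} 3^{-2^j}\) (Mahler's method). The only point worth flagging is that your rational closed forms for \(G_{S_1}\) and \(G_{S_2}\) implicitly use the unambiguity of the decompositions \(\{a\}T^*\) and \(U^*\{a\}^+\), which holds here but deserves a sentence.
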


\begin{corollary}
\(\cfl\) is \(\ucfl\)-inapproximable.
\end{corollary}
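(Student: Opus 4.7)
The plan is to use Kemp's language \(\kemp\) from Theorem~\ref{thm:kemp} as a witness: since \(\kemp \in \cfl\), it suffices to exhibit a single context-free language that is not \(\ucfl\)-approximable. The key ingredients are (i) \(\kemp\) has a \emph{transcendental} natural density, which is the same as its density \(\cd(\kemp)\) whenever the natural density exists, and (ii) Berstel's theorem (Theorem~\ref{thm:ber}) forces the density of any unambiguous CFL, if it exists at all, to be algebraic.

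First I would assume for contradiction that \(\kemp\) is \(\ucfl\)-approximable; in particular it is \((\ucfl, 0)\)-lower-approximable, so there exists \(K \in \ucfl\) with \(K \subseteq \kemp\) and \(\pcd{A}(\kemp \setminus K) = 0\), where \(A = \alph(\kemp) = \{a,b,c\}\). Next I would invoke Claim~\ref{claim:basic}(2): since \(K \subseteq \kemp\) and \(\cd(\kemp) = \d(\kemp)\) exists (being given by Theorem~\ref{thm:kemp}), one obtains
\[
 \cd(K) \;=\; \cd(\kemp) - \cd(\kemp \setminus K) \;=\; \cd(\kemp).
\]
In particular, \(\cd(K)\) exists and equals the transcendental number \(\d(\kemp)\). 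But then Theorem~\ref{thm:ber} applied to the unambiguous CFL \(K\) yields that \(\cd(K)\) must be algebraic, a contradiction.

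There is no serious obstacle here; the argument is a clean composition of Theorem~\ref{thm:kemp} and Theorem~\ref{thm:ber}, together with the elementary density arithmetic of Claim~\ref{claim:basic}. The only minor subtlety worth flagging in the write-up is ensuring that \(\cd(K)\) exists before applying Berstel's theorem, which is guaranteed precisely because \(\cd(\kemp)\) exists and \(\cd(\kemp \setminus K) = 0\). Notice that lower-approximability alone already leads to the contradiction, so the upper-approximation clause of \(\ucfl\)-approximability is not even needed.
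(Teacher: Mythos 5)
Your proof is correct and follows essentially the same route the paper intends: combine Kemp's theorem (transcendental density of \(\kemp\)) with Berstel's theorem (algebraicity of densities of unambiguous CFLs) to rule out even \((\ucfl,0)\)-lower-approximability. The only cosmetic point is that you apply Claim~\ref{claim:basic}(2) in the reverse direction (deducing existence of \(\cd(K)\) from those of \(\cd(\kemp)\) and \(\cd(\kemp\setminus K)\)), which is justified by the same additivity of the \cesaro sums but is worth a half-sentence of justification in the write-up.
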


We then introduce the notion of \(\CC\)-measurability which is a formal
language theoretic analogue of Buck's measure density~\cite{Buck}.
\begin{definition}\upshape
Let \(\CC, \CD\) be classes of languages.
For a language \(L\), we define its
 \emph{\(\CC\)-lower-density} as
\[
 \plm{\CC}(L) \defeq \sup \{ \cd(K) \mid A = \alph(L), K \subseteq L, K
 \in \CC_A, \cd(K) \neq \bot \}
\]
 and its \emph{\(\CC\)-upper-density}
 as
\[
 \pum{\CC}(L) \defeq \inf \{ \cd(K) \mid A = \alph(L), L \subseteq K, K
 \in \CC_A, \cd(K) \neq \bot \}.
\]
A language \(L\) is said to be
 \emph{\(\CC\)-measurable} if
 \(\pum{\CC}(L) = \plm{\CC}(L)\) holds, and we simply write
 \(\pum{\CC}(L)\) as \(\pm{\CC}(L)\).
\(\CD\) is said to be \(\CC\)-measurable if every language
 in \(\CD\) is \(\CC\)-measurable.
\end{definition}
\begin{definition}\upshape
We call \(\pum{\CC}(L) - \plm{\CC}(L)\) the \emph{\(\CC\)-gap} of a
 language \(L\).
We say that a language \emph{\(L\) has full \(\CC\)-gap}
if its \(\CC\)-gap equals to \(1\), \ie, \(\pum{\CC}(L) - \plm{\CC}(L) = 1\).
\end{definition}

In the next section, we describe several examples of both
\(\reg\)-measurable and \(\reg\)-immeasurable languages.
The \(\reg\)-gap could be a good measure how much a given language has
 a complex shape from the viewpoint of regular languages.

The following lemmata are basic.
\begin{lemma}\label{lem:measure}
Let \(K, L\) be two languages.
\begin{enumerate}
\item \(\pum{\CC}(K) \leq \pum{\CC}(L)\) if \(K \subseteq L\).
\item \(\pum{\CC}(K \cup L) \leq \pum{\CC}(K) + \pum{\CC}(L)\) if \(\CC\) is closed under
      union.
 \item \(\pum{\CC}(K) = \cd(K)\) if \(K \in \CC\) and \(\cd(K) \neq \bot\).
\end{enumerate}
\end{lemma}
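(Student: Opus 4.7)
The plan is to reduce each of the three parts to the corresponding basic property of \(\cd\) in Claim~\ref{claim:basic}, plus a straightforward argument from the definition of infimum.

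For part~(1), I would observe that if \(K \subseteq L\), then any \(M \in \CC\) witnessing \(\pum{\CC}(L)\) (i.e., \(L \subseteq M\) with \(\cd(M) \neq \bot\)) automatically satisfies \(K \subseteq M\). Hence every such \(M\) is also admissible in the infimum defining \(\pum{\CC}(K)\), so that infimum is at most \(\pum{\CC}(L)\). A small technical point is that \(\alph(K)\) may be a proper subset of \(\alph(L)\); this is benign because \(\CC\) is a language class in the sense fixed in Section~\ref{sec:pre} (\(\CC_A \subseteq \CC_B\) whenever \(A \subseteq B\)), so any witness for \(L\) lives in \(\CC_{\alph(L)}\) and can still be used to bound \(\pum{\CC}(K)\).

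For part~(2), my plan is an \(\varepsilon\)-argument. Given any \(\varepsilon > 0\), pick \(M_K, M_L \in \CC\) with \(K \subseteq M_K\), \(L \subseteq M_L\) and densities at most \(\pum{\CC}(K) + \varepsilon/2\) and \(\pum{\CC}(L) + \varepsilon/2\), respectively. Since \(\CC\) is closed under union, \(M_K \cup M_L \in \CC\) and covers \(K \cup L\); applying Claim~\ref{claim:basic}(4) gives \(\cd(M_K \cup M_L) \leq \cd(M_K) + \cd(M_L) \leq \pum{\CC}(K) + \pum{\CC}(L) + \varepsilon\). Taking the infimum over \(\varepsilon\) yields the desired inequality. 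For part~(3), note that \(K\) itself is an admissible witness in the infimum defining \(\pum{\CC}(K)\) since \(K \in \CC\), \(K \subseteq K\), and \(\cd(K) \neq \bot\); this gives \(\pum{\CC}(K) \leq \cd(K)\). Conversely, Claim~\ref{claim:basic}(1) says that every admissible \(M\) with \(K \subseteq M\) and \(\cd(M) \neq \bot\) satisfies \(\cd(M) \geq \cd(K)\), so the infimum is also bounded below by \(\cd(K)\).

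The only real subtlety is the application of Claim~\ref{claim:basic}(4) in part~(2), which needs \(\cd(M_K \cup M_L) \neq \bot\) in order for \(M_K \cup M_L\) to serve as an admissible witness for \(\pum{\CC}(K \cup L)\). This is free for the case of primary interest \(\CC = \reg\) by Corollary~\ref{cor:ratden}, so I would state part~(2) with the mild implicit understanding (or explicit remark) that \(\CC\) consists of languages whose (finite) unions admit densities; no further ideas beyond Claim~\ref{claim:basic} and the standard \(\varepsilon\)-argument are needed.
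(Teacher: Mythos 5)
Your proposal is correct in substance and is precisely the elementary argument the author evidently has in mind: the paper itself gives no proof of Lemma~\ref{lem:measure}, dismissing it as ``basic'', and your reduction of each item to Claim~\ref{claim:basic} together with the standard manipulation of infima (monotonicity of the witness set for~(1), the \(\varepsilon/2\)-argument with closure under union and Claim~\ref{claim:basic}(4) for~(2), and the two-sided bound via Claim~\ref{claim:basic}(1) for~(3)) is exactly what is needed. Your explicit flagging of the hypothesis \(\cd(M_K \cup M_L) \neq \bot\) in part~(2) is a genuine point that the paper glosses over, and your observation that it is automatic for \(\CC = \reg\) by Corollary~\ref{cor:ratden} is the right way to discharge it.

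One caveat: your side remark in part~(1) about \(\alph(K) \subsetneq \alph(L)\) does not actually work as stated. The definition of \(\pum{\CC}(K)\) requires the witness to lie in \(\CC_{A}\) with \(A = \alph(K)\) and measures its density relative to \(\alph(K)\); the containment \(\CC_A \subseteq \CC_B\) for \(A \subseteq B\) lets a witness over the \emph{smaller} alphabet be reused over the larger one, which is the opposite of what you need here. Indeed, read literally the inequality can fail when the alphabets differ: for \(K = \{a\}^*\) and \(L = \{a\}^* \cup \{b\}\) one gets \(\pum{\reg}(K) = 1\) (the only superset of \(\{a\}^*\) over \(\{a\}\) is \(\{a\}^*\) itself) while \(\pum{\reg}(L) = 0\). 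The honest fix is simply to read the lemma, as the paper implicitly does in all its applications (e.g.\ \(\othree = L_A(a,b) \cup L_A(a,c)\) over a common \(A\)), for languages over one fixed ambient alphabet, under which your argument for part~(1) goes through verbatim with no further remark needed.
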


\begin{lemma}\label{lem:carathe}
Let \(\CC\) be a language class
 such that \(\CC\) is closed under complement and every language in \(\CC\) has a density.
A language \(L \subseteq A^*\) is \(\CC\)-measurable if and only if
\begin{align}
 \pum{\CC}(L) + \pum{\CC}(\overline{L}) = 1.\label{con:comp}
\end{align}
\end{lemma}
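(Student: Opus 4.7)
The plan is to reduce the two-sided condition $\pum{\CC}(L) = \plm{\CC}(L)$ to the one-sided condition $\pum{\CC}(L) + \pum{\CC}(\overline{L}) = 1$ by establishing the duality identity
\[
 \plm{\CC}(L) = 1 - \pum{\CC}(\overline{L}).
\]
Once this identity is in hand, the lemma is immediate: substituting it into the equation $\pum{\CC}(L) = \plm{\CC}(L)$ yields exactly $\pum{\CC}(L) + \pum{\CC}(\overline{L}) = 1$, and conversely.

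To prove the duality identity, I would exploit complementation as a bijection between the witnessing families appearing in the definitions of $\plm{\CC}$ and $\pum{\CC}$. First, fix $A = \alph(L)$, which we may also assume equals $\alph(\overline{L})$ (the degenerate cases where $L$ is empty or equal to $A^*$ are trivial and can be dispatched directly). Given any $K \in \CC_A$ with $K \subseteq L$, the language $\overline{K} = A^* \setminus K$ lies in $\CC_A$ by the hypothesis that $\CC$ is closed under complement, satisfies $\overline{L} \subseteq \overline{K}$, and has density $\cd(\overline{K}) = 1 - \cd(K)$ by Claim~\ref{claim:basic}(3); moreover $\cd(K)$ and $\cd(\overline{K})$ both exist by the hypothesis that every language in $\CC$ has a density, so the ``$\cd \neq \bot$'' side conditions in both $\plm{\CC}$ and $\pum{\CC}$ are automatic. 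The map $K \mapsto \overline{K}$ therefore gives a bijection between the set of $\CC_A$-lower approximants of $L$ and the set of $\CC_A$-upper approximants of $\overline{L}$, and it sends $\cd(K)$ to $1 - \cd(K)$.

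Taking the supremum of $\cd(K)$ over the left-hand family equals $1$ minus the infimum of $\cd(\overline{K})$ over the right-hand family, which is precisely the claimed identity $\plm{\CC}(L) = 1 - \pum{\CC}(\overline{L})$. Combining this with the definition of $\CC$-measurability finishes the proof.

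The main obstacle is largely bookkeeping rather than conceptual: one must check that the ``$\cd \neq \bot$'' qualifiers in the definitions of $\plm{\CC}$ and $\pum{\CC}$ do not interfere with the bijection (which is why the hypothesis that every language of $\CC$ has a density is invoked), and that $\CC_A$ is invariant under complementation within the fixed ambient alphabet $A$ (which follows from the hypothesis that $\CC$ is closed under complement together with the convention $\overline{K} = A^* \setminus K$). No new combinatorial input is required beyond Claim~\ref{claim:basic}(3).
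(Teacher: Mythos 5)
Your proof is correct and follows essentially the same route as the paper: both establish the duality $\plm{\CC}(L) = 1 - \pum{\CC}(\overline{L})$ via the complementation bijection $K \mapsto \overline{K}$ between lower approximants of $L$ and upper approximants of $\overline{L}$, using Claim~\ref{claim:basic}(3), and then substitute into the definition of $\CC$-measurability. Your extra attention to the $\cd \neq \bot$ qualifiers and the alphabet convention is harmless bookkeeping that the paper's own proof treats implicitly.
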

\begin{proof}
Let \(L\) be a language and \(A = \alph(L)\).
By definition, \(L\) satisfies Condition~\eqref{con:comp} if and only if
\begin{align}
 \inf \{ \cd(K) \mid L \subseteq K, K \in \CC \}
 = 1 - \inf \{ \cd(K) \mid \overline{L} \subseteq K, K \in \CC \} \label{con:comp2}
\end{align}
holds. On the other hand, \(L\) is measurable if and only if
\begin{align}
 \inf \{ \cd(K) \mid L \subseteq K, K \in \CC \}
 = \sup \{ \cd(K) \mid K \subseteq L, K \in \CC \}. \label{con:comp3}
\end{align}
For any language \(K \in \CC_A\) such that
 \(K \subseteq L\) and \(\cd(K) \neq \bot\), its complement \(\overline{K}\) satisfies
 \(\overline{L} \subseteq \overline{K}\) and \(\cd(\overline{K}) = 1 -
 \cd(K)\).
This means that if \(\CC_A\) is closed under complement then
 \(
 \sup \{ \cd(K) \mid K \subseteq L, K \in \CC_A \} = 1 - \inf \{ \cd(K) \mid \overline{L} \subseteq K, K \in \CC_A \},
 \) holds, which immediately implies the equivalence of Condition~\eqref{con:comp2} and Condition~\eqref{con:comp3}.
\end{proof}

 \section{\(\reg\)-measurability on Context-free Languages}\label{sec:cfl}
In this section we examine \(\reg\)-measurability of several types of context-free languages.
The first type of languages (Section~\ref{sec:null}) is null context-free languages.
Although some null language can have a full \(\reg\)-gap as stated in
the next 
theorem,
we will show
that typical null context-free languages are \(\reg\)-measurable.
\begin{theorem}\label{thm:hardnull}
There is a recursive language \(L\) which is null but
\(\um(L) = 1\).
\end{theorem}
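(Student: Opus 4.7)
The plan is to diagonalize against every regular language of density strictly less than $1$. Fix $A = \{a, b\}$ and effectively enumerate DFAs $D_1, D_2, \ldots$ over $A$ so that the sequence $M_i \defeq L(D_i)$ lists, with repetitions, exactly the regular languages $M \in \reg_A$ with $\cd(M) < 1$. This enumeration is legitimate: by Theorem~\ref{thm:salomaa} the density of any regular language is rational and effectively computable from a DFA, so the predicate ``$\cd(M) < 1$'' is decidable. For each surviving $M_i$, the complement $\overline{M_i}$ has $\cd(\overline{M_i}) > 0$, so by Corollary~\ref{cor:null} and Theorem~\ref{thm:null} it is non-null and dense; in particular it contains words of every sufficiently large length, and I may effectively pick $w_i \in \overline{M_i}$ with $|w_i|$ strictly larger than $i$ and than all previously chosen $|w_j|$, and containing both letters (to guarantee $\alph(L) = A$). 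Set $L \defeq \{w_i \mid i \geq 1\}$.

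Then $L$ is recursive by construction: to decide whether $w \in L$, run the effective construction until the prescribed lower bound on $|w_i|$ exceeds $|w|$, and compare $w$ with the finitely many $w_j$ produced so far. Nullness is immediate from sparsity: $\card{L \cap A^n} \leq 1$ for every $n$, hence $\card{L \cap A^n}/\card{A^n} \leq 2^{-n}$, giving $\cd(L) = 0$. For the upper density, suppose for contradiction that some $K \in \reg_A$ with $L \subseteq K$ satisfies $\cd(K) < 1$. Then $K = M_i$ for some $i$ in the enumeration, and by construction $w_i \in L \subseteq K = M_i$ while simultaneously $w_i \in \overline{M_i}$, a contradiction. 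Hence no regular superset of $L$ has density below $1$, and $\um(L) = 1$.

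The only nontrivial ingredient is the effectivity of the enumeration, which reduces to (i) decidability of ``$\cd(M) < 1$'' for a given regular $M$, supplied by Theorem~\ref{thm:salomaa} (or equivalently by searching for a forbidden word of $\overline{M}$), and (ii) the ability to find, uniformly in $i$, a witness $w_i \in \overline{M_i}$ of arbitrary length, which is immediate from the density of the non-null regular language $\overline{M_i}$ guaranteed by Theorem~\ref{thm:null}. I expect no further obstacle; the only subtlety to keep in mind is ensuring $\alph(L) = A$ so that the infimum defining $\um(L)$ is genuinely taken over all of $\reg_A$, as intended.
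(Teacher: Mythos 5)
Your proof is correct and takes essentially the same approach as the paper: both diagonalize over an effective enumeration of the ``dangerous'' regular languages, placing at most one witness word per length so that the result is recursive and null, the only cosmetic difference being that the paper filters automata by the simpler decidable condition ``co-infinite'' (then observes $\cd(R)<1$ implies co-infinite) whereas you filter by ``$\cd(M)<1$'' directly. One small imprecision: denseness of $\overline{M_i}$ does not give words of \emph{every} sufficiently large length (consider $(AA)^*$), but non-nullness makes $\overline{M_i}$ (and its intersection with the words using both letters) infinite, which yields the arbitrarily long witnesses you actually need.
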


\begin{proof}
Let \(A\) be an alphabet with \(\card{A} \geq 2\) and let \((\CA_i)_{i \in \nat}\) be an
 enumeration of automata over \(A\)
 such that \(\reg_A = \{ L(\CA_i) \mid i \in \nat\}\); we can take such
 enumeration by enumerating some binary representation of automata via
 shortlex order \(\slo\).
We will construct a null language \(L\) such that \(\um(L) = 1\), in
 particular, \(L\) is not a subset of every regular co-infinite language.

Consider the following program \(P\) which takes an input word \(w\):
\begin{description}
 \item[Step~1] set \(i = 0\) and \(\ell = 0\).
 \item[Step~2] check \(L(\CA_i)\) is co-infinite (\ie,
			the complement \(\overline{L(\CA_i)}\) is infinite) or not.
 \item[Step~3] if \(L(\CA_i)\) is co-finite, then set \(i = i + 1\) and go back to Step~2.
 \item[Step~4] otherwise, pick \(u\) such that \(u\) is the smallest
	    (with respect to \(\slo\)) word satisfying
	    \(|u| > \ell\) and \(u \notin L(\CA_i)\) (such \(u\) surely exists
	    since \(L(\CA_i)\) is co-infinite).
 \item[Step~5] if \(w = u\) then \(P\) accepts \(w\) and halts.
 \item[Step~6] if \(w \slo u\) then \(P\) rejects \(w\) and halts.
 \item[Step~7] if \(u \slo w\) then set \(\ell = |u|\), \(i = i + 1\) and
	    go back to Step~2.
\end{description}

One can easily observe that all Steps are effective and \(P\) ultimately
 halts for any input word \(w\) because the length of the word \(u\) in Step~4 is strictly
 increasing until \(u = w\) or \(w \slo u\).
Thus the language \(L \defeq \{ w \in A^* \mid P \text{ accepts } w \}\) is
 recursive.
 Moreover, \(L\) satisfies the following properties: (1) \(L \not\subseteq R\) for any regular co-infinite language
 because by Step~(4--5) \(P\) accepts some word
 \(w \notin R\), and (2) \(\d(L) = 0\); by Step~(5--6) and the length of \(u\) is strictly increasing,
 \(P\) rejects every word in \(A^n\) except
 for one single word \(u\), for each \(n\).
 Clearly, (2) implies \(\d(L) = 0\), and
 (1) implies \(\um(L) = 1\) since every language \(R\)
 with \(\cd(R) < 1\) is co-infinite.
 \end{proof} 
The second type of languages (Section~\ref{sec:amb}) is inherently ambiguous languages
and the third type of languages (Section~\ref{sec:kemp}) includes Kemp's language \(\kemp\) whose density is transcendental.
The last type of languages (Section~\ref{sec:gap}) is languages with full
\(\reg\)-gap, \ie, strongly
 \(\reg\)-immeasurable languages.

\subsection{Null Context-free Languages}\label{sec:null}
First we consider the following language with constraints on the number
of occurrences of letters, which is a very typical example of a non-regular but context-free language.
\begin{definition}\label{def:langeq}\upshape
For an alphabet \(A\) and letters \(a,b \in A\) such that \(a \neq b\),
 we define
\[
L_A(a,b) \defeq \{ w \in A^* \mid |w|_a = |w|_b\}.
\]
\end{definition}
\begin{theorem}\label{thm:ab}
\(L_A(a,b)\) is \(\reg\)-measurable where \(A = \{a,b\}\).
\end{theorem}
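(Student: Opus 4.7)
The plan is to show that $L_A(a,b)$ is not merely $\reg$-measurable but in fact has $\reg$-measure $0$. First, since $\card{L_A(a,b) \cap A^n} = \binom{n}{n/2}$ when $n$ is even (and $0$ otherwise), Stirling gives $\card{L_A(a,b) \cap A^n}/\card{A^n} = \Theta(1/\sqrt{n}) \to 0$, so $\cd(L_A(a,b)) = 0$. The lower side is then immediate: $\emptyset \subseteq L_A(a,b)$ already witnesses $\lm(L_A(a,b)) = 0$, and the whole task collapses to producing, for each $\epsilon > 0$, a regular superset of $L_A(a,b)$ with Ces\`aro density at most $\epsilon$.

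For this I would use the ``balanced modulo $k$'' family
\[
 M_k \defeq \{ w \in A^* \mid |w|_a \equiv |w|_b \pmod{k} \}, \qquad k \geq 1.
\]
Each $M_k$ is regular because a DFA on $k$ states can track the residue of $|w|_a - |w|_b$ modulo $k$ (reading $a$ adds $+1$, reading $b$ subtracts $1$) and accept exactly at state $0$, and $L_A(a,b) \subseteq M_k$ holds trivially because $|w|_a = |w|_b$ implies $|w|_a \equiv |w|_b \pmod k$. The decisive claim is then $\cd(M_k) = 1/k$, which together with $\um(L_A(a,b)) \leq \cd(M_k)$ gives $\um(L_A(a,b)) \leq 1/k$ for every $k \geq 1$, hence $\um(L_A(a,b)) = 0$ and $L_A(a,b)$ is $\reg$-measurable with measure $0$.

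The only real calculation, and therefore the main obstacle, is $\cd(M_k) = 1/k$. The DFA just described is the symmetric nearest-neighbour random walk on $\mathbb{Z}/k\mathbb{Z}$, whose transition matrix is a circulant with eigenvalues $\cos(2\pi j/k)$ for $0 \leq j < k$; an orthogonality / Fourier inversion argument gives
\[
 \frac{\card{M_k \cap A^n}}{\card{A^n}} = \frac{1}{k} \sum_{j=0}^{k-1} \cos(2\pi j/k)^n,
\]
in which $j = 0$ contributes $1/k$ in the limit, every $j \notin \{0, k/2\}$ decays to $0$ because its eigenvalue has modulus strictly less than $1$, and the $j = k/2$ term (present only when $k$ is even) oscillates as $(-1)^n/k$ and vanishes in Ces\`aro mean. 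The existence of the Ces\`aro density is in any case already guaranteed by Corollary~\ref{cor:ratden}, so this calculation only needs to identify its value as $1/k$, after which the theorem is complete.
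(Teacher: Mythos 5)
Your proposal is correct and follows essentially the same route as the paper: your supersets $M_k = \{w \mid |w|_a \equiv |w|_b \pmod k\}$ are exactly the complements of the paper's witnesses $L_k = \{w \mid |w|_a \not\equiv |w|_b \pmod k\}$, and the circulant/Fourier eigenvalue computation giving $\cd(M_k) = 1/k$ (equivalently $\d(L_k) = (k-1)/k$) is the same spectral argument. The only cosmetic difference is that you handle even $k$ by absorbing the $(-1)^n/k$ term in the Ces\`aro mean, whereas the paper simply restricts to odd $k$.
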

\iffull
\begin{proof}
\else
\begin{proof}[sketch]
\fi
It is enough to show that the complement \(L = \overline{L(a,b)}\)
 satisfies \(\lm(L) = 1\).
For each \(k \geq 1\), we define
\[
  L_k \defeq \{ w \in A^* \mid |w|_a \neq |w|_b \mod k\}.
\]
 Clearly, \(L_k \subseteq L\) holds.
 Each \(L_k\) is recognised by a \(k\)-states deterministic automaton
 \[
  \CA_k =
 (Q_k = \{q_0,\ldots, q_{k-1}\}, \Delta_k: Q_k \times A \rightarrow Q_k, q_0, Q_k \setminus \{q_0\})
 \]
 where
 \[
  \Delta_k(q_i, a) = q_{i+1 \!\!\!\!\mod k} \qquad \Delta_k(q_i, b) = q_{i-1
 \!\!\!\!\mod k} \quad (\text{ for each } i \in \{0, \ldots, k-1\}),
 \]
 \(q_0\) is the initial state, and any other state \(q \in Q_k \setminus
 \{q_0\}\) is a final 
 \iffull
 state
 (the case \(k = 3\) is depicted in Fig~\ref{fig:A3}).
 \else
 state.
\fi
 \iffull
\begin{figure}[t]
\centering\includegraphics[width=.2\columnwidth]{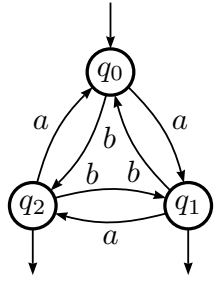}
\caption{The deterministic automaton \(\CA_3\) in the Proof of
 Theorem~\ref{thm:ab}.
 Here, the state \(q_0\) having unlabelled incoming arrow is
 initial and the states \(q_1, q_2\) having unlabelled
 outgoing arrow are final.}
\label{fig:A3}
\end{figure}
 The adjacency matrix of \(\CA_k\) is 
\begin{align*}
  M_k = \! \begin{bmatrix}
		 0&1&0&\cdots &\cdots &1\\
		 1&0&1&\ddots &&\vdots \\
		 0&1&\ddots &\ddots &\ddots &\vdots \\
		 \vdots &\ddots &\ddots &\ddots &1&0\\
		 \vdots &&\ddots &1&0&1\\
		 1&\cdots &\cdots &0&1&0
	   \end{bmatrix}
\! = E_k + E_k^{k-1} \text{ where }
E_k = \! \begin{bmatrix}
		 0&0&0&\cdots &\cdots &1\\
		 1&0&0&\ddots &&\vdots \\
		 0&1&\ddots &\ddots &\ddots &\vdots \\
		 \vdots &\ddots &\ddots &\ddots &0&0\\
		 \vdots &&\ddots &1&0&0\\
		 0 &\cdots &\cdots &0&1&0
       \end{bmatrix}\!.
\end{align*}

\(M_k\) is a special case of \emph{circulant matrices}.
 A \(k\)-dimensional circulant matrix \(C_k\) is a matrix that can be represented by a
 polynomial of \(E_k\):
 \[
  C_k = p(E_k) = \sum_{n = 0}^{k-1} c_n E_k^n
 \]
 and it is well
 known that \(C_k\) can be diagonalised as, for a \(k\)-th root of unity \(\xi_k = e^{-\frac{2
 \pi i}{k}}\) (where \(i\) is the imaginary unit),
 \[
 \frac{1}{\sqrt{k}} F_k^H \cdot C_k \cdot \frac{1}{\sqrt{k}} F_k = \mathrm{diag}(p(1),
 p(\xi_k^{-1}), p(\xi_k^{-2}), \ldots, p(\xi_k^{-(k-1)}))
 \]
 where \(F_k = (f_{n,m})\) with \(f_{n,m} = \xi_k^{(n-1)(m-1)}\)
 (for \(1 \leq n, m \leq k\)) is the \(k\)-dimensional \emph{Fourier matrix}, \(F_k^H\) is its Hermitian transpose 
 and \(\mathrm{diag}(\lambda_1, \cdots, \lambda_k)\) is the diagonal matrix
 whose \(n\)-th diagonal element is \(\lambda_n\) (for \(1 \leq n \leq
 k\)) (\cf Section~5.2.1~of~\cite{spectra}).
 Hence, in the case of \(M_k = p_{\CA_k}(E_k) = E_k + E_k^{k-1}\), we have
\begin{align}
 \frac{1}{\sqrt{k}} F_k^H \cdot M_k \cdot \frac{1}{\sqrt{k}} F_k = 
 \mathrm{diag}(2, \xi_k^{-1}+\xi_k, \xi_k^{-2}+\xi_k^{2}, \ldots,
 \xi_k^{-(k-1)}+\xi_k^{k-1}) \label{eq:diag}
\end{align}
 because, for any \(n \geq 0\), \(p_{\CA_k}(\xi_k^{-n}) = \xi_k^{-n} + \xi_k^{-n(k-1)} = \xi_k^{-n} +
 \xi_k^{n}\) holds.

 Let \(\Lambda_k = \mathrm{diag}(2, \xi_k^{-1}+\xi_k, \xi_k^{-2}+\xi_k^{2}, \ldots, \xi_k^{-(k-1)}+\xi_k^{k-1})\).
 Because \(\CA_k\) is deterministic
 and the final states are all but \(q_0\), the number of words of length \(n\)
 in \(L_k\) is exactly the number of paths from \(q_0\) to any other
 state in \(\CA_k\).
 For the \(k\)-dimensional vectors \(\bm{e} = (1,0,0,\ldots,0)\) and \(\bm{1} =
 (1, 1, 1, \ldots, 1)\), from Equation~\eqref{eq:diag} we have
\begin{align}
 & \card{L_k \cap A^n} = \bm{e} \cdot M_k^n \cdot (\bm{1} - \bm{e})^T \nonumber\\
 & = \frac{1}{k} \bm{e} \cdot F_k \cdot \Lambda_k^n \cdot F_k^H
 (\bm{1}-\bm{e})^T \nonumber\\
 &= \frac{1}{k} \bm{1} \cdot \Lambda_k^n \cdot \left(k-1, \sum_{j = 1}^{k-1} \xi_k^{-j},
 \sum_{j = 1}^{k-1} \xi_k^{-2j}, \ldots, \sum_{j = 1}^{-(k-1)}
 \xi_k^{-(k-1)j}\right)^T \nonumber\\
 &= \frac{1}{k} \left( 2^n (k-1) + (\xi_k^{-1} + \xi_k)^n \sum_{j = 1}^{k-1}
 \xi_k^{-j} + \cdots + (\xi_k^{-(k-1)} + \xi_k^{k-1})^n \sum_{j = 1}^{k-1}
 \xi_k^{-(k-1)j} \right). \label{eq:sum}
\end{align}
If \(k\) is odd \(k = 2m+1\), then for any \(1 \leq j \leq k-1\),
 \(\xi_k^{-j}+\xi_k^{j}\) is a real number whose absolute value is strictly smaller than
 \(2\); because \(\xi_k^{-j}\) is the complex conjugate of \(\xi_k^j\)
 and hence \(|\xi_k^{-j} + \xi_k^j| = |2\mathrm{Re}(\xi_k^j)| < 2\) for odd \(k\).
 Hence from Equation~\eqref{eq:sum} we can deduce that
\else
By an analysis of the adjacency matrix of \(\CA_n\), we can deduce that
\fi
\[
  \card{L_k \cap A^n} = \frac{k-1}{k} 2^n + o(2^n)
\]
 where \(o(2^n)\) means some function such that
 \(\lim_{n \rightarrow \infty} o(2^n)/2^n = 0\).
 Thus we have \(\d(L_k) = \frac{k-1}{k}\) for odd \(k = 2m + 1\), which tends to
 \(1\) if \(k\) tends to infinity, \ie, \(\m(L) = 1\).
 This completes the proof. \qedd
\end{proof}

By Theorem~\ref{thm:ab}, it is also true that any subset of
\(L_{\{a,b\}}(a,b)\) is \(\reg\)-measurable.
In particular, we have:
\begin{corollary}
The semi-Dyck language \(\dyck \subseteq L_{\{a,b\}}(a,b)\) is \(\reg\)-measurable.
\end{corollary}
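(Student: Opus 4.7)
The plan is to derive the corollary as an immediate consequence of (the proof of) Theorem~\ref{thm:ab} together with the monotonicity of \(\pum{\reg}\) (Lemma~\ref{lem:measure}(1)). The key observation is that Theorem~\ref{thm:ab} does more than just establish measurability of \(L_{\{a,b\}}(a,b)\): its proof exhibits regular sublanguages \(L_k \subseteq \overline{L_{\{a,b\}}(a,b)}\) with \(\cd(L_k) = (k-1)/k\), and hence by complementation the regular supersets \(\overline{L_k}\) of \(L_{\{a,b\}}(a,b)\) satisfy \(\cd(\overline{L_k}) = 1/k \to 0\). This gives \(\um(L_{\{a,b\}}(a,b)) = 0\).

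First I would record the general principle underlying the intermediate remark preceding the corollary: if \(K \subseteq L\) and \(\um(L) = 0\), then \(K\) is \(\reg\)-measurable with \(\m(K) = 0\). Indeed, by monotonicity (Lemma~\ref{lem:measure}(1)) we have \(\um(K) \leq \um(L) = 0\); since \(\emptyset \in \reg\) is a regular subset of \(K\) with density \(0\) we have \(\lm(K) \geq 0\); and \(\lm(K) \leq \um(K)\) holds in general (any regular subset of \(K\) is contained in any regular superset of \(K\), so the supremum of densities of the former is bounded by the infimum of densities of the latter). Combining these gives \(\lm(K) = \um(K) = 0\).

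Then I would conclude: since \(\dyck \subseteq L_{\{a,b\}}(a,b)\) and \(\um(L_{\{a,b\}}(a,b)) = 0\), applying the principle above with \(K = \dyck\) yields \(\m(\dyck) = 0\). In particular, \(\dyck\) is \(\reg\)-measurable.

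There is really no hard step here: the entire content is packaged inside Theorem~\ref{thm:ab} and its proof. The only point requiring a little care is to note that Theorem~\ref{thm:ab} was stated as mere measurability, but the construction in its proof actually yields \(\um(L_{\{a,b\}}(a,b)) = 0\) — after which monotonicity of \(\um\) does all the work.
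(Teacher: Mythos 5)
Your proposal is correct and follows exactly the route the paper takes: the proof of Theorem~\ref{thm:ab} shows \(\lm(\overline{L_{\{a,b\}}(a,b)})=1\), hence \(\um(L_{\{a,b\}}(a,b))=0\), and the corollary follows by monotonicity of \(\um\) applied to the inclusion \(\dyck \subseteq L_{\{a,b\}}(a,b)\). The paper states this only as the one-line remark that ``any subset of \(L_{\{a,b\}}(a,b)\) is \(\reg\)-measurable,'' and your write-up simply makes that step explicit.
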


The next example is the set of all palindromes.
\begin{theorem}\label{thm:pal}
\(
 \pal_A \defeq \{ w \in A^* \mid  w = \rev(w) \}
\)
 is \(\reg\)-measurable.
\end{theorem}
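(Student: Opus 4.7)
The plan is to show that $\pm{\reg}(\pal_A) = 0$, which gives $\reg$-measurability. The case $\card{A} = 1$ is trivial, since then $\pal_A = A^*$ is regular, so I would assume $\card{A} \geq 2$ throughout.

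First I would observe that $\pal_A$ is null: a palindrome of length $n$ is determined entirely by its first $\lceil n/2 \rceil$ letters, so $\card{\pal_A \cap A^n}/\card{A^n} = \card{A}^{-\lfloor n/2 \rfloor}$, which tends to $0$. Hence $\d(\pal_A) = 0$ and the trivial inclusion $\emptyset \subseteq \pal_A$ already witnesses $\lm(\pal_A) = 0$. It remains to exhibit regular overapproximations of $\pal_A$ whose densities tend to $0$.

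For each $k \geq 1$ I would define
\[
M_k \defeq A^{<2k} \cup \bigl\{ w \in A^* \mid |w| \geq 2k \text{ and the length-}k\text{ prefix of } w \text{ equals } \rev(v), \text{ where } v \text{ is the length-}k\text{ suffix of } w \bigr\}.
\]
The inclusion $\pal_A \subseteq M_k$ is immediate, since any palindrome of length at least $2k$ satisfies the prefix/suffix condition by definition. Regularity of $M_k$ comes from a standard sliding-window construction: an automaton stores the first $k$ input letters in its finite-state memory, then maintains the last $k$ letters read in a rolling buffer, and at acceptance time compares the stored prefix with the reverse of the buffer. A direct count then gives, for $n \geq 2k$,
\[
\card{M_k \cap A^n} = \card{A}^k \cdot \card{A}^{n-2k} = \card{A}^{n-k},
\]
so the natural density $\d(M_k) = \card{A}^{-k}$ exists and $\cd(M_k) = \card{A}^{-k}$ as well.

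Putting this together, $\um(\pal_A) \leq \cd(M_k) = \card{A}^{-k}$ for every $k \geq 1$, forcing $\um(\pal_A) = 0 = \lm(\pal_A)$, as required. There is no real obstacle in this argument; the only point needing a brief justification is the regularity of $M_k$, handled by the sliding-window automaton sketched above, and verifying that the density computation picks up only the constraint on the $2k$ fixed positions at the two ends of the word.
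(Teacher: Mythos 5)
Your proof is correct and takes essentially the same approach as the paper: your outer approximations \(M_k\) are exactly the complements of the inner approximations \(L_k = \{ w_1 A^* w_2 \mid w_1, w_2 \in A^k,\ w_1 \neq \rev(w_2)\}\) that the paper uses for \(\overline{\pal_A}\), and the density count \(\card{A}^{n-k}\) versus \(\card{A}^n - \card{A}^{n-k}\) is the same computation viewed from the two sides.
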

\iffull
\begin{proof}
Because the case \(\card{A} = 1\) is trivial (\(\pal_A = A^*\)), we assume that
 \(\card{A} \geq 2\).
It is enough to show that the complement \(\overline{\pal_A}\) is \(\reg\)-measurable.

For each \(k \geq 1\), we define
\[
  L_k \defeq \{ w_1 A^* w_2 \mid w_1, w_2 \in A^k, w_1 \neq \rev(w_2) \}.
\]
 One can easily observe that \(L_k \subseteq \overline{\pal_A}\) for each \(k \geq 1\).
 Moreover, for any \(n > 2k\), the number of words in \(L_k\) of length
 \(n\) is
\[
 \card{L_k \cap A^n} = \card{A}^k \cdot \card{A}^{n-2k} \cdot
 (\card{A}^k - 1) =  \card{A}^{n} - \card{A}^{n-k} . \label{eq:pal}
\]
 From this we can conclude that \(\d(L_k) = 1 - \card{A}^{-k}\)
and it tends to \(1\) if \(k\) tends to infinity.
Thus we have \(\m(\overline{\pal_A}) = 1\). 
\end{proof}
\else
\begin{proof}[sketch]
For each \(k \geq 1\), one can easily observe that
\[
  L_k \defeq \{ w_1 A^* w_2 \mid w_1, w_2 \in A^k, w_1 \neq \rev(w_2) \}
\]
 is a proper subset of the complement \(\overline{\pal_A}\).
 By an elementary analysis, we can prove
 \(\d(L_k) = 1 - \card{A}^{-k}\) and its tends to \(1\), \ie,
 \(\m(\pal_A) = 0\).  \qedd
\end{proof}
\fi
 
\subsection{Some Inherently Ambiguous Languages}\label{sec:amb}
There are \(\reg\)-measurable inherently ambiguous context-free
languages.
Since every \emph{bounded language} \(L \subseteq w_1^* \cdots
w_k^*\) is trivially \(\reg\)-measurable (\(\m(L) = 0\)),
a typical example of an inherently ambiguous context-free language 
\(\{ a^i b^j c^k \mid i = j \text{ or } i = k\}\)
is \(\reg\)-measurable.

Some more complex examples of inherently ambiguous languages are the following languages with
constraints on the number of occurrences of letters investigated by
Flajolet~\cite{flajolet}:
\begin{align*}
 \othree &\defeq \{ w \in \{a,b,c\}^* \mid |w|_a = |w|_b \text{ or } |w|_a
 = |w|_c \},\\
 \ofour &\defeq \{ w \in \{x,\bar{x},y, \bar{y}\}^* \mid |w|_{x} = |w|_{\bar{x}} \text{ or } |w|_y
 = |w|_{\bar{y}} \}.
\end{align*}

\begin{theorem}\label{thm:flaj}
\(\othree\) and \(\ofour\) are \(\reg\)-measurable.
\end{theorem}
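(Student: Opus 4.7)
The plan is to adapt the modular-approximation technique from the proof of Theorem~\ref{thm:ab} to obtain regular subsets of $\overline{\othree}$ and $\overline{\ofour}$ whose densities tend to $1$. By Lemma~\ref{lem:carathe}, Corollary~\ref{cor:ratden}, and the fact that $\reg$ is closed under complement, it suffices to exhibit such subsets for each of the two complements: this will force $\lm(\overline{\othree}) = \lm(\overline{\ofour}) = 1$, and then the complement identity used inside the proof of Lemma~\ref{lem:carathe} yields $\um(\othree) = \um(\ofour) = 0$, giving the desired measurability (with measure $0$).

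For $\othree$, for every odd $k \geq 1$ I would define the regular language
\[
  L_k \defeq \{ w \in \{a,b,c\}^* \mid |w|_a \not\equiv |w|_b \pmod{k} \text{ and } |w|_a \not\equiv |w|_c \pmod{k} \},
\]
which is the intersection of two languages each accepted by an obvious $k$-state DFA analogous to the $\CA_k$ of Theorem~\ref{thm:ab}, and is clearly contained in $\overline{\othree}$. I would then compute the density of each modular factor by the same circulant-diagonalisation argument as in Theorem~\ref{thm:ab}, except that the relevant adjacency matrix is now $E_k + E_k^{-1} + I_k$: the extra $I_k$ records the contribution of the third letter $c$, which does not alter $|w|_a - |w|_b$. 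Its eigenvalues are $2\cos(2\pi j/k) + 1$ for $0 \leq j \leq k-1$, and for odd $k$ the principal eigenvalue $3$ is achieved only at $j = 0$, yielding a density of $(k-1)/k$ for each individual modular factor. An inclusion--exclusion bound then gives $\d(L_k) \geq 1 - 2/k$, so $\lm(\overline{\othree}) \geq 1 - 2/k$ for every odd $k$, hence $\lm(\overline{\othree}) = 1$.

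For $\ofour$ the argument is structurally identical after substituting the pairs $(a,b),(a,c)$ by the two disjoint pairs $(x,\bar x),(y,\bar y)$. The adjacency matrix now becomes $E_k + E_k^{-1} + 2 I_k$, since two letters (instead of one) do not contribute to the relevant imbalance; its non-principal eigenvalues $2\cos(2\pi j/k) + 2$ are again strictly smaller than $4$ in modulus for odd $k$, giving individual densities $(k-1)/k$ and, by the same inclusion--exclusion, $\d(L_k) \geq 1 - 2/k$, which tends to $1$.

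The main obstacle I foresee is essentially bookkeeping: verifying that the circulant-matrix analysis of Theorem~\ref{thm:ab} transfers verbatim to these two modified adjacency matrices, \ie, that the additional diagonal term $I_k$ (resp.\ $2 I_k$) does not introduce any new eigenvalue of maximal modulus for odd $k$. Once that is confirmed, the inclusion--exclusion step avoids any delicate joint analysis of the two modular conditions, and the rest of the argument follows the template of Theorem~\ref{thm:ab} word for word.
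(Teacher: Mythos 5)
Your proposal is correct and follows essentially the same route as the paper: the paper likewise builds the modular automata of Theorem~\ref{thm:ab} augmented with self-loops for the letters not involved in the relevant pair (adjacency matrix $M_k + I_k$, resp.\ $M_k + 2I_k$), concludes $\um(L_A(a,b)) = \um(L_A(a,c)) = 0$, and combines them. The only cosmetic difference is that you intersect the two complements and apply inclusion--exclusion, whereas the paper applies the subadditivity $\um(K \cup L) \leq \um(K) + \um(L)$ of Lemma~\ref{lem:measure} directly to $\othree = L_A(a,b) \cup L_A(a,c)$ --- these are dual phrasings of the same estimate.
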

\iffull
\begin{proof}
\else
\begin{proof}[sketch]
\fi
Let \(A = \{a,b,c\}\).
For the case \(\othree\), in a very similar way to Theorem~\ref{thm:ab}, we can construct  a
sequence of automata \((\CA_k^{ab})_{k \in \nat}\) such that
each automaton \(\CA_k^{ab}\) satisfies \(L(\CA_k^{ab}) \subseteq
\overline{L_A(a,b)}\)
\iffull
and its adjacency matrix is of the form
\begin{align*}
  M_k^{ab} = M_k + I_k = \begin{bmatrix}
		 1&1&0&\cdots &\cdots &1\\
		 1&1&1&\ddots &&\vdots \\
		 0&1&\ddots &\ddots &\ddots &\vdots \\
		 \vdots &\ddots &\ddots &\ddots &1&0\\
		 \vdots &&\ddots &1&1&1\\
		 1&\cdots &\cdots &0&1&1
		\end{bmatrix}
\end{align*}
where \(M_k\) is the adjacency matrix stated in Theorem~\ref{thm:ab} and
 \(I_k\) is the \(k\)-dimensional identity matrix.
\else
.
\fi
The automaton \(\CA_k^{ab}\) is obtained by just adding self-loop
 labeled by \(c\) for each state \(q \in Q_k\) of \(\CA_k\) in Theorem~\ref{thm:ab}.
This sequence of automata ensures that the language \(L_A(a,b)\) is \(\reg\)-measurable (\(\um(L_A(a,b)) = 0\), in particular).
 The same argument is applicable to the language \(L_A(a,c)\), thus
 these union \(\othree = L_A(a,b) \cup L_A(a,c)\) is also
\iffull
\(\reg\)-measurable by Lemma ~\ref{lem:measure}.
\else
\(\reg\)-measurable.
\fi
The case \(\ofour\) can be achieved in the same manner. \qedd
 \end{proof}
 
Next we consider the so-called \emph{Goldstine language}
\[
 \gold \defeq \{ a^{n_1} b a^{n_2}b \cdots a^{n_p} b \mid p \geq 1, n_i \neq i
 \text{ for some } i \}.
\]
While \(\gold\) can be accepted by a non-deterministic pushdown
automaton, its generating function is not algebraic~\cite{FLAJOLET1987283} and thus it is
 an inherently ambiguous context-free language due to
 the well-known Chomsky--Sch\"utzenberger theorem stating that
 the generating function of every unambiguous context-free language is algebraic~\cite{CHOMSKY1963118}.

\begin{theorem}\label{thm:gol}
\(\gold\) is \(\reg\)-measurable.
\end{theorem}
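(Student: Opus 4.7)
The plan is to show $\m(\gold) = 1/2$ by verifying $\um(\gold) + \um(\overline{\gold}) = 1$ and invoking Lemma~\ref{lem:carathe}. The key observation is that, although $\gold$ is an inherently ambiguous context-free language, its complement \emph{inside} $A^* b$ consists only of the exceptional words $G = \{a^1 b a^2 b \cdots a^p b \mid p \geq 1\}$, a slender set which can be covered by a shrinking family of regular cylinder languages.

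First, since every word of $\gold$ ends with $b$, we have $\gold \subseteq A^* b$, and hence $\um(\gold) \leq \cd(A^* b) = 1/2$. For the complement, I decompose $\overline{\gold} = \overline{A^* b} \cup G$; the left-hand term is regular of density $1/2$. To cover $G$ regularly with vanishing density, fix $k \geq 1$ and set $P_k = a^1 b a^2 b \cdots a^k b$. Any word $a^1 b \cdots a^p b \in G$ with $p \geq k$ has $P_k$ as a prefix, so $G \subseteq F_k \cup P_k A^*$ where $F_k = \{a^1 b \cdots a^p b \mid 1 \leq p < k\}$ is finite. Since $|P_k| = k(k+3)/2$, the regular language $M_k = \overline{A^* b} \cup F_k \cup P_k A^* \supseteq \overline{\gold}$ satisfies $\cd(M_k) \leq 1/2 + 2^{-k(k+3)/2}$, which tends to $1/2$, yielding $\um(\overline{\gold}) \leq 1/2$.

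Combining the two upper bounds gives $\um(\gold) + \um(\overline{\gold}) \leq 1$; the reverse inequality is immediate, since for any $K_1, K_2 \in \reg$ with $\gold \subseteq K_1$ and $\overline{\gold} \subseteq K_2$ we have $K_1 \cup K_2 = A^*$, so $\cd(K_1) + \cd(K_2) \geq \cd(K_1 \cup K_2) = 1$ by subadditivity, and taking infima yields $\um(\gold) + \um(\overline{\gold}) \geq 1$. Equality together with Lemma~\ref{lem:carathe} then establishes $\reg$-measurability, with $\m(\gold) = 1/2$. The only substantive step is the decomposition of $\overline{\gold}$ into the ``regular bulk'' $\overline{A^* b}$ and the slender tail $G$: once this is in place, the entire non-regularity of $\gold$ lives inside an asymptotically negligible set that a single cylinder $P_k A^*$ absorbs up to finitely many exceptions, so no spectral analysis of the kind used in Theorem~\ref{thm:ab} is needed.
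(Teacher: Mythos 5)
Your proof is correct, and it rests on exactly the same key observation as the paper's: inside $A^*b$ the complement of $\gold$ is just the set $G$ of $b$-terminated prefixes of the single infinite word $a\,b\,a^2b\,a^3b\cdots$, so all of the non-regularity of $\gold$ is concentrated in an asymptotically negligible set. Where you differ is in the packaging. The paper works from below: it takes the forbidden prefixes of $\overline{\gold}$ of each length $k$ (all of $A^k$ except the one prefix of the infinite word), forms the regular subsets $L_k = (L_\gold \cap A^k)A^*\{b\} \subseteq \gold$, and computes $\d(L_k) = 2^{-1} - 2^{-k-1} \to 1/2$ directly, which together with $\um(\gold) \leq \cd(A^*b) = 1/2$ gives $\m(\gold) = 1/2$. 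You instead work from above on the complement, covering $G$ by a finite set plus a single deep cylinder $P_kA^*$ of density $2^{-k(k+3)/2}$, and then pass to measurability via the Carath\'eodory criterion of Lemma~\ref{lem:carathe} (which is legitimate for $\CC = \reg$ by closure under complement and Corollary~\ref{cor:ratden}). The two routes are dual --- the complement of your $M_k$ is a regular subset of $\gold$ of density tending to $1/2$, much like the paper's $L_k$ --- but your version is slightly leaner in its bookkeeping (one cylinder instead of a union of $2^k - 1$ cylinders, no counting of $\card{L_\gold \cap A^n}$) at the cost of invoking Lemma~\ref{lem:carathe}; both correctly conclude $\m(\gold) = 1/2$.
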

\begin{proof}
Let \(A = \{a,b\}\).
Observe that
\(\gold \subseteq A^* b \) and \(\um(\gold) \leq \d(A^* b) = 1/2\).
Let
\[
 L_\gold = \{ u \in A^* \mid u A^*\{b\} \cap \overline{\gold} = \emptyset \}
\]
 be the set of all forbidden prefixes of the complement \(\overline{\gold}\).
For each \(k \geq 1\), we define
\[
 L_k \defeq \{ u A^*\{b\} \mid u \in L_\gold \cap A^k \}.
\]
 If a word \(u\) is in \(L_\gold\), then by definition of \(L_\gold\),
 \(uvb\) is always in \(\gold\) for any word \(v\), thus \(L_k \subseteq
 G\) holds for each \(k\).
 Any word in \(\overline{L_\gold} = A^* \setminus L_\gold\)
 is a prefix of the infinite word
 \(
 a^{n_1} b a^{n_2} b a^{n_3} b \cdots \; (n_i = i \text{ for each
 } i \in \nat)
 \)
 thus \(\card{L_\gold \cap A^n} = \card{A^n} - 1\) holds for each \(n \geq 1\).
 Hence we have
\begin{align*}
\d(L_k) &= \lim_{n \rightarrow \infty} \! \frac{\card{L_k \cap A^n}}{\card{A^n}} =
 \lim_{n \rightarrow \infty} \! \frac{(\card{A^k}-1) \cdot \card{A^{n-k-1}}}{\card{A^n}}
 \\
 &=
 (\card{A}^k-1) \cdot \card{A}^{-k-1}
 = 2^{-1}-2^{-k-1}.
\end{align*}
This implies that \(\d(L_k)\) tends to \(1/2\).
 Thus \(\m(\gold) = 1/2\). \qedd
\end{proof}

In general, for an infinite word \(w \in A^\omega\), the set
\[
 \mathrm{Copref}(w) \defeq A^* \setminus \{ u \in A^* \mid u \text{ is a prefix of } w \}
\]
is called the \emph{coprefix language of \(w\)}.
The proof of Theorem~\ref{thm:gol} uses a key property that \(\gold\)
can be characterised by using
the coprefix language of the infinite word \(w = a^{n_1} b a^{n_2} b
a^{n_3} b \cdots\) as \(\gold = \mathrm{Copref}(w) \cap
\{a,b\}^*\{b\}\) which was pointed out in~\cite{copref}. Thus by the same argument, we can say that any
coprefix language \(L\) is \(\reg\)-measurable (\(\m(L) = 1\), in
particular).

For coprefix languages, the following nice ``gap theorem'' holds.
\begin{theorem}[Autebert--Flajolet--Gabarro~\cite{copref}]
Let \(w \in A^\omega\) be an infinite word generated by an iterated morphism, \ie,
 \(w = h(w) = h^\omega(a)\) for some monoid morphism \(h: A^* \rightarrow
 A^*\) and letter \(a \in A\).
Then for the coprefix language \(L = \mathrm{Copref}(w)\) there are only
 two possibilities:
\begin{enumerate}
 \item \(L\) is a regular language. 
 \item \(L\) is an inherently ambiguous context-free language.
\end{enumerate}
\end{theorem}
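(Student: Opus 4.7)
The plan is to establish three things in turn: first that $L = \mathrm{Copref}(w)$ is always context-free when $w = h^\omega(a)$, then that $L$ is regular in the ``tame'' case where $w$ is ultimately periodic, and finally that $L$ is inherently ambiguous in every other case. The overall structure mirrors a typical gap theorem: the putative intermediate classes (deterministic CFL, unambiguous CFL) collapse onto the regular class for this family, so everything comes down to ruling out the unambiguous but non-regular case.

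To show $L \in \cfl$, I would exploit the self-similarity $w = h(w)$. A word $u$ lies in $L$ iff some position $i < |u|$ satisfies $u[i] \neq w[i]$, and the letter $w[i]$ can be computed by descending through the derivation tree of $w$ under iterated application of $h$: starting from the root $a$, one repeatedly expands the factor currently containing position $i$ via $h$ and re-indexes $i$ into the appropriate child. A nondeterministic pushdown automaton can implement this descent by guessing $i$ and the sequence of child-choices, using the stack to keep track of the index arithmetic at each level (the relevant block lengths $|h^k(b)|$ for $b \in A$ form a finitely specified set), then verifying at the leaf that $w[i] \neq u[i]$.

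For the regular case, if $w$ is ultimately periodic, $w = xy^\omega$, then the set of prefixes of $w$ is the finite union $\{x' : x' \text{ prefix of } x\} \cup \{xy^n z : n \geq 0, z \text{ prefix of } y\}$, which is regular; hence $L$, its complement in $A^*$, is regular as well.

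The hard part, and the main obstacle, is proving inherent ambiguity when $w$ is not ultimately periodic. Generating functions give no leverage here: the prefix set of $w$ contributes exactly $1/(1-z)$, so the generating function of $L$ is rational, and Chomsky--Sch\"utzenberger (algebraicity of unambiguous CFL generating functions) is vacuously satisfied; similarly Berstel's theorem (Theorem~\ref{thm:ber}) is of no use since $\cd(L)=1$. The argument must therefore be structural. I would proceed by contradiction: assume $L$ has an unambiguous grammar $G$, and apply a pumping-style theorem for unambiguous context-free languages (Ogden's iteration lemma in its unambiguous form, \emph{\`a la} Ogden--Ross--Winklmann or Wich) to families of words of the shape $u_n b v_n$ where $u_n b$ is a prefix of $w$ of increasing length and $v_n$ is a ``correcting'' suffix. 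Because a non-periodic morphic word has unbounded recurrence complexity, one can select two iterable factors within such words whose pumpings must each independently stay inside $L$; matching the two possible pumpings against the unique infinite path $w$ forces $w$ itself to admit a non-trivial pumping relation, hence to be ultimately periodic, contradicting the hypothesis. The technical core is packaging this ``two independent pumpings'' argument so that the combinatorics of morphic words (Cobham/Mossé-style structure of $h^\omega(a)$) is exactly what is needed to rule out a consistent unambiguous parse; this is where the bulk of the work lies and where one must be most careful.
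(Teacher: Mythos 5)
First, a remark on the ground truth: the paper offers no proof of this statement at all — it is quoted verbatim from Autebert--Flajolet--Gabarro~\cite{copref} and used as a black box — so there is no internal argument to compare yours against, and I am judging the proposal on its own terms. Your first two steps are acceptable as sketches: the pushdown construction showing $\mathrm{Copref}(w)\in\cfl$ is essentially the known one (though ``the relevant block lengths $|h^k(b)|$ form a finitely specified set'' hides real work, since these lengths grow without bound and the stack must encode an entire tower of exponents, with extra care needed for erasing morphisms and for letters $b$ with $|h^k(b)|$ bounded), and the ultimately periodic case is immediate.

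The genuine gap is in the third step, which is where the entire content of the theorem lives. Two concrete problems. (i) The constraint ``the pumpings must each independently stay inside $L$'' carries almost no information: $\mathrm{Copref}(w)$ contains every word of length $n$ except one, so membership of pumped words in $L$ is nearly automatic and cannot by itself force $w$ to be ultimately periodic. The only leverage available is the single excluded word of each length (equivalently, the boundary words $p_n b$ with $p_n$ the length-$n$ prefix of $w$ and $b \neq w[n+1]$), and your sketch never says how the iteration lemma is actually brought to bear on those. (ii) Inherent ambiguity via an unambiguous iteration lemma requires exhibiting one word with two distinct derivations (or otherwise contradicting a derivation count), not merely exhibiting two independently pumpable factors; you never explain how the two pumpings are made to collide on the same word. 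Relatedly, ``a non-periodic morphic word has unbounded recurrence complexity'' is not a usable fact in the form stated --- by Pansiot's theorem morphic words have at most quadratic factor complexity, and many non-periodic ones are uniformly recurrent with linear complexity --- so the combinatorial input you invoke is not pinned down. As written, the proposal is a plausible research plan for the hard implication, not a proof of it.
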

This means that we can construct, by finding some suitable morphism
\(h\), many examples of inherently ambiguous context-free languages.

\subsection{\(\kemp\): A Language with Transcendental Density}\label{sec:kemp}
We now show the fact that the language \(\kemp\) defined by
Kemp~\cite{Kemp1980} (recall that the definition of \(\kemp\) appeared
in Therem~\ref{thm:kemp}) is \(\reg\)-measurable.
We will actually show a more general result regarding the following
type of languages.
\begin{definition}\upshape
Let \(L \subseteq A^*\) be a language and \(c \notin A\) be a letter.
We call the language \(L \{c\} (A \cup \{c\})^*\) over \(A \cup \{c\}\)
\emph{suffix extension of \(L\) by \(c\)}.
\end{definition}
\begin{theorem}\label{thm:suff}
The suffix extension \(L' \subseteq (A \cup \{c\})^*\) of any language
 \(L \subseteq A^*\) by \(c \notin A\) is \(\reg\)-measurable.
\end{theorem}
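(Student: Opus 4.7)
\textbf{Proof plan for Theorem~\ref{thm:suff}.} Write $B = A \cup \{c\}$ with $|A| = m \geq 1$, so $|B| = m+1$. The plan is to exploit the fact that, apart from the null set $A^*$ of $c$-free words, every word in $B^*$ admits a unique factorisation $w = u c v$ with $u \in A^*$ (the prefix up to the first occurrence of $c$), and membership in $L'$ depends only on $u$. So I would truncate this prefix at length $k$ to obtain regular approximations from both sides, and show that the ``don't know'' region shrinks geometrically in $k$.

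Concretely, for each $k \geq 0$ define the finite set $L_k = L \cap A^{<k}$ and set
\[
K_k \defeq L_k \{c\} B^*, \qquad M_k \defeq K_k \cup A^k A^* \{c\} B^* \cup A^*.
\]
Both are regular (in fact, $K_k$ is a finite union of cosets $u\{c\}B^*$, and $M_k$ adds two more regular pieces). I would first verify $K_k \subseteq L' \subseteq M_k$: the inclusion $K_k \subseteq L'$ is immediate, and for $L' \subseteq M_k$, any $w \in L'$ factors as $w = u c v$ with $u \in L$, and if $|u| < k$ then $w \in K_k$, otherwise $w \in A^k A^*\{c\}B^*$.

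Next I would estimate $\cd(M_k \setminus K_k)$. Since $M_k \setminus K_k \subseteq A^k A^*\{c\}B^* \cup A^*$, and $A^*$ is clearly null in $B^*$ (its density is $\lim_n m^n/(m+1)^n = 0$), it suffices to bound the density of $A^k A^*\{c\} B^*$. A direct counting gives, for $n$ large,
\[
\frac{\card{A^k A^*\{c\}B^* \cap B^n}}{\card{B^n}} = \frac{1}{m+1} \sum_{i=k}^{n-1} \left(\frac{m}{m+1}\right)^i \xrightarrow{n\to\infty} \left(\frac{m}{m+1}\right)^k,
\]
which tends to $0$ as $k \to \infty$. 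By Claim~\ref{claim:basic} and the sandwich $K_k \subseteq L' \subseteq M_k$, we get $\plm{\reg}(L') \geq \cd(K_k)$ and $\pum{\reg}(L') \leq \cd(M_k)$, so the $\reg$-gap is bounded by $\cd(M_k) - \cd(K_k) = \cd(M_k \setminus K_k)$, which tends to $0$. Hence $\pum{\reg}(L') = \plm{\reg}(L')$, i.e.\ $L'$ is $\reg$-measurable.

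The only subtle points are (i) checking that $M_k$ really covers $L'$ (which requires the unique-first-$c$ decomposition) and (ii) the geometric decay of $\d(A^k A^*\{c\}B^*)$; both are elementary once set up. There is no real obstacle: the theorem holds essentially because the first-$c$ position is almost surely small, so ``what $L$ does'' on short prefixes already determines membership in $L'$ up to a vanishing error.
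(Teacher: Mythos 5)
Your proposal is correct and follows essentially the same route as the paper: your $K_k$ is exactly the paper's lower approximant $\bigcup_{w \in L \cap A^{<k}} w\{c\}B^*$, and your $M_k$ coincides with the paper's upper approximant $B^* \setminus \bigcup_{w \in \overline{L}\cap A^{<k}} w\{c\}B^*$, described from the complementary side. The only cosmetic difference is that the paper additionally identifies the common value $\mu_{\reg}(L') = \sum_{w\in L}\card{B}^{-(|w|+1)}$ as the natural density of $L'$, whereas you only bound the gap by $(m/(m+1))^k \to 0$, which suffices for measurability.
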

 \begin{proof}
Let \(B = A \cup \{c\}\) and \(k = \card{B}\).
We first show that \(L'\) has a natural density.
For any words \(u, v \in L\) with \(u \neq v\),
 two languages \(u\{c\} B^*\) and
 \(v\{c\} B^*\) are disjoint, and clearly
 \[
 \card{u\{c\} B^* \cap B^n}/\card{B^n}
  = \card{u\{c\} B^{n-|u|-1}}/\card{B^n}
  = k^{n-|u|-1}/k^{n}
  = k^{-(|u|+1)}
 \]
 holds for \(n > |u|\)
 thus \(\pd{B}(u\{c\}B^*) = k^{-(|u|+1)}\).
The natural density of \(L'\) is
\begin{align}
 \pd{B}(L') & = \lim_{n \rightarrow \infty} \frac{\card{L' \cap
 B^n}}{\card{B^n}}
 = \lim_{n \rightarrow \infty} \frac{\card{ \bigcup_{w \in L} (w \{c\} B^*
 \cap B^n)}}{\card{B^n}} \nonumber\\
 &= \lim_{n \rightarrow \infty} \frac{\sum_{w \in L} \card{w \{c\} B^*
 \cap B^n}}{\card{B^n}} = \lim_{n \rightarrow \infty} \sum_{w \in
 (L \cap A^{<n} )} k^{-(|w|+1)}. \label{eq:lim}
\end{align}
 Because the sequence \((\sum_{w \in
 (L \cap A^{<n} )} k^{-(|w|+1)})_{n \in \nat}\) is non-decreasing and bounded above by
 \(1\), the limit \eqref{eq:lim} exists, say \(\pd{B}(L') = \alpha\).

For each \(n \in \nat\), the language
 \(L_n \defeq \bigcup_{w \in L \cap A^{<n}} w\{c\}B^*\) is regular (since \(L \cap
 A^{<n}\) is finite), \(L_n \subseteq L'\) and
 \(\pd{B}(L_n) = \sum_{w \in (L \cap A^{<n} )} k^{-(|w|+1)}\).
 Hence \(\lm(L') = \alpha\).
 By similar argument,
 for each \(n \in \nat\), we can claim that the language
 \(K_n \defeq B^* \setminus \bigcup_{w \in \overline{L} \cap A^{<n}}
 w\{c\}B^* \)
 satisfies \(K_n \supseteq L'\) and \(\pd{B}(K_n)\) tends to \(\alpha\)
 if \(n\) tends to infinity.
 Thus \(\m(L') = \alpha\).\qedd
 \end{proof}
Since \(\kemp\) is the suffix extensions of the union \(S_1 \cup S_2\) in
Theorem~\ref{thm:kemp}, we have:
\begin{corollary}
\(\kemp\) is \(\reg\)-measurable.
\end{corollary}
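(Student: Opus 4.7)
The plan is to observe that Kemp's language \(\kemp\) is literally an instance of the construction covered by Theorem~\ref{thm:suff}, so no new work is needed beyond matching up the definitions. Specifically, I would first unpack the definition: in Theorem~\ref{thm:kemp} the sets \(S_1\) and \(S_2\) are subsets of \(\{a,b\}^*\), and
\[
 \kemp = L_1 \cup L_2 = S_1\{c\}A^* \cup S_2\{c\}A^* = (S_1 \cup S_2)\{c\}A^*
\]
where \(A = \{a,b,c\}\) and \(c \notin \{a,b\}\). Setting \(A_0 = \{a,b\}\) and \(L = S_1 \cup S_2 \subseteq A_0^*\), this is exactly the suffix extension \(L\{c\}(A_0 \cup \{c\})^*\) of \(L\) by \(c\) in the sense of the definition preceding Theorem~\ref{thm:suff}.

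Then the conclusion is immediate: Theorem~\ref{thm:suff} asserts that the suffix extension of \emph{any} language by a fresh letter is \(\reg\)-measurable, with \(\m(L')\) equal to the convergent sum \(\sum_{w \in L} k^{-(|w|+1)}\) where \(k = |A_0 \cup \{c\}|\). Applied here, this gives
\[
 \m(\kemp) = \sum_{w \in S_1 \cup S_2} 3^{-(|w|+1)},
\]
and in particular \(\kemp\) is \(\reg\)-measurable. The regular approximants are the explicit ones produced in the proof of Theorem~\ref{thm:suff}: the lower approximants \(\bigcup_{w \in (S_1 \cup S_2) \cap A_0^{<n}} w\{c\}A^*\) and the dual upper approximants obtained by removing \(w\{c\}A^*\) for each short \(w \in \overline{S_1 \cup S_2}\).

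There is essentially no obstacle to overcome here; the only thing to check is that \(c\) is indeed a fresh letter not appearing in \(S_1 \cup S_2\), which is immediate from the definitions of \(S_1\) and \(S_2\). The conceptual content of the corollary is not in the proof but in what it says: even though \(\kemp\) has a \emph{transcendental} natural density and is therefore inherently ambiguous by Theorem~\ref{thm:ber}, it can still be asymptotically approximated arbitrarily well from inside and outside by regular languages. Thus transcendence of density is not an obstruction to \(\reg\)-measurability, in sharp contrast with the \(\reg\)-immeasurability phenomena treated in Section~\ref{sec:gap}.
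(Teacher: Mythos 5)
Your proposal is correct and follows essentially the same route as the paper: the paper likewise observes that \(\kemp = (S_1 \cup S_2)\{c\}\{a,b,c\}^*\) is the suffix extension of \(S_1 \cup S_2\) by the fresh letter \(c\) and then invokes Theorem~\ref{thm:suff}. Your additional remarks on the explicit approximants and the value of \(\pm{\reg}(\kemp)\) are consistent with the proof of that theorem and add nothing that conflicts with the paper.
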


\begin{remark}\label{rem:relax}
Theorem~\ref{thm:suff} indicates that \(\reg\)-measurability is a quite
relaxed property in some sense:
even for a non-recursively-enumerable language,
its suffix extension is still non-recursively-enumerable but \(\reg\)-measurable.
  Moreover, because the class of recursively enumerable languages is
 just a countable set, there exist \emph{uncountably many}
 \(\reg\)-measurable non-recursively-enumerable languages.
\end{remark}

The same proof method works for the \emph{prefix extension} and the
\emph{infix extension} (see the full version~\cite{full} for details).

\iffull
The same proof method works for the \emph{prefix extension} and the \emph{infix extension}.
\begin{theorem}\label{thm:ext}
Let \(c \notin A\) and \(A' = A \cup \{c\}\).
The prefix extension \(L' = A'^* \{c\} L\) of any language
 \(L \subseteq A^*\) is \(\reg\)-measurable.
Also, the infix extension \(L'' = A'^* \{c\} L \{c\} A'^*\) of any language
 \(L \subseteq A^*\) is \(\reg\)-measurable,
 \(\m(L'') = 0\) if \(L = \emptyset\),
 \(\m(L'') = 1\) otherwise, in particular.
\end{theorem}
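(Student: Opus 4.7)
My plan is to handle the prefix and infix cases with quite different flavours. For the prefix case, I would mirror the proof of Theorem~\ref{thm:suff} essentially verbatim, decomposing each word according to its \emph{last} occurrence of the separator letter $c$ rather than the first. The key observation making the mirror work is that for distinct $w, v \in L \subseteq A^*$, the regular languages $A'^*\{c\}w$ and $A'^*\{c\}v$ are disjoint: any common word admits two decompositions $xcw = ycv$, but because neither $w$ nor $v$ contains $c$, the displayed $c$ in each decomposition is forced to coincide with the last occurrence of $c$ in the word, so $w = v$. Once this disjointness is in hand, the count $\pd{A'}(A'^*\{c\}w) = k^{-(|w|+1)}$ (where $k = \card{A'}$) and the convergence of $\alpha = \sum_{w \in L} k^{-(|w|+1)}$ (bounded above by $\sum_{w \in A^*} k^{-(|w|+1)} = 1$, valid since $\card{A} < k$) go through verbatim. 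I would then approximate $L'$ from below by $L_n = \bigcup_{w \in L \cap A^{<n}} A'^*\{c\}w$ and from above by $K_n = A'^* \setminus \bigcup_{w \in \overline{L} \cap A^{<n}} A'^*\{c\}w$, both regular and with densities converging to $\alpha$, giving $\m(L') = \alpha$.

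For the infix case the argument is much shorter and avoids computing any density explicitly. If $L = \emptyset$ then $L'' = \emptyset$ and $\m(L'') = 0$ trivially. Otherwise, I would fix some word $w_0 \in L$ and consider the regular subset $R = A'^*\{c\}w_0\{c\}A'^*$ of $L''$. Its complement $\overline{R}$ is regular and has $cw_0c$ as a forbidden word, so Theorem~\ref{thm:null} immediately yields $\cd(\overline{R}) = 0$ and hence $\cd(R) = 1$. This gives $\lm(L'') \geq \cd(R) = 1$, and the trivial upper bound $\um(L'') \leq 1$ forces $\m(L'') = 1$.

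The only place I expect any real care is the disjointness step in the prefix case; everything else is a symmetric transcription of Theorem~\ref{thm:suff} or a one-line invocation of Theorem~\ref{thm:null}. A small sanity check worth recording is that the infix case works uniformly even in degenerate settings such as $\card{A'} = 1$, since a forbidden factor always forces a regular language to be finite (hence null) in the unary case as well, so the appeal to Theorem~\ref{thm:null} is legitimate regardless of alphabet size.
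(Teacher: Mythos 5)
Your proof is correct and takes essentially the same route as the paper: the paper dismisses the prefix case by noting it is the reverse of the suffix extension (your last-occurrence-of-\(c\) decomposition is exactly that mirrored argument, with the disjointness check made explicit), and for the infix case it likewise observes that \(A'^*\{c\}w_0\{c\}A'^* \subseteq L''\) is co-null because its complement has \(cw_0c\) as a forbidden word, by the infinite monkey theorem. No gaps.
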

\begin{proof}
 The prefix extension of \(L\) is just the reverse
 of the suffix extension of \(L\), the same proof method trivially
 works.
 For the infix extension
 \(L'' = A'^* \{c\} L \{c\} A'^*\),
 if \(L = \emptyset\) then \(L''\) is also empty and thus \(\m(L'') =
 0\).
 Further, if \(L \neq \emptyset\) then
 there is a word \(w \in L\) and thus
 \(A'^* c w c A'^* \subseteq  L''\) holds,
 which means that \(\pd{A'}(A'^* c w c A'^*) = 1\) by the infinite monkey
 theorem and we have \(\m(L'') = 1\).
\end{proof}
\fi

\subsection{Languages with Full \(\reg\)-Gap}\label{sec:gap}
In Section~\ref{sec:null}, we showed that the language \(L_{\{a,b\}}(a,b)\) is
\(\reg\)-measurable. On the other hand,
by the result of Eisman--Ravikumar~\cite{approx}, we will know that the closely related language
\[
 \maj \defeq \{ w \in \{a,b\}^* \mid |w|_a > |w|_b \},
\]
sometimes called the \emph{majority language}, is not \(\reg\)-measurable.
This contrast is interesting.

\begin{theorem}[Eisman--Ravikumar~\cite{approx,Eisman2011}]
Let \(A = \{a,b\}\) and \(L \subseteq A^*\) be a regular language.
Then \(\maj \subseteq L\) implies
\iffull
 \[
 \limsup_{n \rightarrow \infty}
 \{\card{\overline{L} \cap A^n}/\card{A^n}\} = 0.
 \]
 \else
 \(\limsup_{n \rightarrow \infty}
 \{\card{\overline{L} \cap A^n}/\card{A^n}\} = 0.
 \)
 \fi
\end{theorem}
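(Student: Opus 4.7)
The plan is to prove the stronger statement that $\overline{L}$ has \emph{natural density} zero, which trivially yields the claimed limsup. The reduction goes through Theorem~\ref{thm:null}: since $\overline{L}$ is regular, it suffices to show that $\overline{L}$ is not dense, i.e.\ that some word $u$ fails to occur as a factor of any word of $\overline{L}$; then $\overline{L}$ is null by Theorem~\ref{thm:null}, and Corollary~\ref{cor:null} upgrades this to $\d(\overline{L}) = 0$. From $\maj \subseteq L$ we get $\overline{L} \subseteq \overline{\maj} = \{w \in A^* \mid |w|_a \leq |w|_b\}$, so the real content is that a regular subset of $\overline{\maj}$ cannot contain arbitrarily long $a$-blocks as factors.

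To exhibit a forbidden factor I would fix a DFA $\CA = (Q, \delta, q_0, F)$ for $\overline{L}$, set $N = |Q|$, and claim that $u = a^{N+1}$ is forbidden in $\overline{L}$. Suppose for contradiction that $x a^{N+1} y \in \overline{L}$ for some $x, y \in A^*$. Reading the block $a^{N+1}$ starting from $\delta(q_0, x)$ visits $N+2$ states, so by pigeonhole two of them coincide, producing a loop labelled $a^p$ with $1 \leq p \leq N+1$. Standard pumping then gives $x a^{N+1+kp} y \in \overline{L}$ for every $k \geq 0$. Taking $k$ large enough that $|x|_a + N + 1 + kp + |y|_a > |x|_b + |y|_b$ yields a word simultaneously in $\maj \subseteq L$ and in $\overline{L}$, a contradiction.

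Consequently $a^{N+1}$ is forbidden in $\overline{L}$, so $\overline{L}$ is not dense, therefore null by Theorem~\ref{thm:null}, and finally $\d(\overline{L}) = 0$ by Corollary~\ref{cor:null}, which gives $\lim_{n\to\infty} \card{\overline{L} \cap A^n}/\card{A^n} = 0$ and hence the limsup statement. I do not expect a real obstacle here: the pumping and letter-counting are routine, and the heavy lifting is done by Theorem~\ref{thm:null}, which converts the combinatorial fact ``$\overline{L}$ avoids a long all-$a$ block as a factor'' into the required density statement. The only mildly delicate point is choosing a forbidden \emph{factor} rather than a forbidden suffix, since the pumping argument needs the $a$-block to sit inside the word so that the loop can be iterated without disturbing the trailing $y$.
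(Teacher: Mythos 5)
Your argument is correct: pumping the block $a^{N+1}$ inside a word of $\overline{L}$ produces words of $\overline{L}$ with unboundedly many $a$'s but a fixed number of $b$'s, contradicting $\overline{L}\subseteq\overline{\maj}$; hence $a^{N+1}$ is a forbidden word of $\overline{L}$, and the chain ``not dense $\Rightarrow$ null (Theorem~\ref{thm:null}) $\Rightarrow$ $\d(\overline{L})=0$ (Corollary~\ref{cor:null})'' is exactly the right way to convert this into the limsup statement. Note, however, that the paper does not prove this theorem at all --- it is imported verbatim from Eisman--Ravikumar --- so the only in-paper argument to compare against is the proof of the generalisation, Theorem~\ref{thm:pmaj}. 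That proof shares your central idea (a non-null regular language cannot avoid a long single-letter block) but implements it differently: instead of pumping a loop in a DFA for $\overline{L}$, it works with the syntactic monoid of the non-null regular language, uses the constant $c=\max_{m\in M}\min_{w\in\eta^{-1}(m)}|w|$ to force the context words $x,y$ around the block $b^{2c}$ to have length at most $c$, and concludes that $xb^{2c}y$ already violates the majority condition without any iteration. Your route is more elementary (no semigroup machinery, only the pigeonhole principle on states) and isolates the cleaner intermediate fact that $\overline{L}$ has an explicit forbidden factor; the paper's route is the one that scales directly to $\maj_m$ and to the symmetric claim $\um(\maj_m)=1$, since bounding $|x|,|y|\leq c$ gives quantitative control over both letter counts at once rather than only making one of them grow. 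Either method proves the stated theorem; yours is a perfectly good self-contained substitute for the citation.
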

One can easily observe that \(\limsup_{n \rightarrow \infty}
\{\card{\overline{L} \cap A^n}/\card{A^n}\} = 0\)
if and only if \(\d(\overline{L}) = 0\), which means that any regular
superset of \(\maj\) is co-null.
Thus the above theorem implies
that both \(\maj\) and \(\overline{\maj}\) are \(\reg^+\)-immune, hence
we have:
\begin{corollary}
\(\maj\) has full \(\reg\)-gap.
\end{corollary}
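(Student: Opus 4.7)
The plan is to split the claim into the two bounds \(\um(\maj) = 1\) and \(\lm(\maj) = 0\); since \(0 \leq \lm(\maj) \leq \um(\maj) \leq 1\) always, these together will force the \(\reg\)-gap to be exactly \(1\). For the upper bound I would take any regular \(L\) with \(\maj \subseteq L\) and apply the Eisman--Ravikumar theorem directly: it gives \(\limsup_{n} \card{\overline{L} \cap A^n}/\card{A^n} = 0\), i.e.\ \(\d(\overline{L}) = 0\). By Corollary~\ref{cor:null} then \(\cd(\overline{L}) = 0\), and by Claim~\ref{claim:basic}(3) we get \(\cd(L) = 1\). Since this holds for every regular superset of \(\maj\), the defining infimum of \(\um(\maj)\) equals \(1\).

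For the lower bound I would argue via the complement. Note that the Eisman--Ravikumar theorem is stated only for the letter \(a\), but by swapping \(a \leftrightarrow b\) the same statement holds for the symmetric language \(\maj' \defeq \{w \in \{a,b\}^* \mid |w|_b > |w|_a\}\): every regular superset of \(\maj'\) is co-null. Now let \(K \subseteq \maj\) be any regular language; then its complement \(\overline{K}\) is regular and satisfies \(\overline{K} \supseteq \overline{\maj} \supseteq \maj'\), so by the symmetric version of the theorem \(\cd(\overline{K}) = 1\). By Claim~\ref{claim:basic}(3) this forces \(\cd(K) = 0\). Hence the defining supremum of \(\lm(\maj)\) is \(0\), which is precisely the \(\reg^+\)-immunity of \(\maj\) already mentioned in the excerpt.

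Combining the two bounds, \(\um(\maj) - \lm(\maj) = 1 - 0 = 1\), which is exactly ``full \(\reg\)-gap''. There is no serious obstacle in the argument: all the difficulty is already absorbed in the Eisman--Ravikumar theorem. The only point requiring a small comment is the invocation of the ``mirror'' version of that theorem for \(\maj'\), which is immediate from a letter-renaming symmetry but must be stated explicitly in order to derive the \(\reg^+\)-immunity of \(\overline{\maj}\) and hence the vanishing of \(\lm(\maj)\).
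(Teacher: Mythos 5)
Your proposal is correct and follows essentially the same route as the paper: the paper likewise derives \(\um(\maj)=1\) directly from the Eisman--Ravikumar theorem and obtains \(\lm(\maj)=0\) from the \(\reg^+\)-immunity of \(\maj\), which (as you note) rests on the letter-swapping symmetry applied to complements of regular subsets of \(\maj\). You merely spell out the symmetry step that the paper leaves implicit, which is a welcome clarification rather than a deviation.
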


By using the infinite monkey theorem and some probabilistic arguments,
we can generalise the previous theorem as follows.
\begin{theorem}\label{thm:pmaj}
For any \(m \geq 1\), the following language over \(A = \{a,b\}\)
\[
  \maj_m \defeq \{ w \in A^* \mid |w|_a > m \cdot |w|_b \}
\]
 has full \(\reg\)-gap, and
\(\d(\maj_m) = 1/2\) if \(m = 1\) otherwise \(\d(\maj_m) = 0\).
\end{theorem}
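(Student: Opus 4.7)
The plan is to split the theorem into two independent pieces: computing $\d(\maj_m)$ by a standard concentration estimate, and establishing the full $\reg$-gap by a uniform pumping argument against dense regular sublanguages.

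For the density, note that $w \in \maj_m$ iff $|w|_a > mn/(m+1)$ where $n = |w|$. When $m = 1$ the threshold is $n/2$, so the $a \leftrightarrow b$ symmetry of the uniform distribution on $A^n$ yields $P(|w|_a > n/2) = P(|w|_b > n/2)$, and since $P(|w|_a = |w|_b) = \binom{n}{n/2}/2^n \to 0$ one obtains $\d(\maj_1) = 1/2$. For $m \geq 2$, the threshold $mn/(m+1)$ exceeds the expected $a$-count $n/2$ by the fixed positive fraction $(m-1)/(2(m+1))$, so Chernoff's bound produces an exponentially small probability, hence $\d(\maj_m) = 0$.

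For the full $\reg$-gap, I claim that both $\maj_m$ and $\overline{\maj_m}$ are $\reg^+$-immune, which together give $\lm(\maj_m) = 0$ and (via the duality $\um(L) = 1 - \lm(\overline{L})$ implicit in the proof of Lemma~\ref{lem:carathe}) $\um(\maj_m) = 1$, exactly as in the $m = 1$ case of Eisman--Ravikumar. By Theorem~\ref{thm:null}, $\reg^+$-immunity of $L$ amounts to: no dense regular language is contained in $L$. So let $R$ be a dense regular language and $\CA = (Q, \delta, q_0, F)$ a DFA recognising $R$. Since $Q$ is finite, the $a$-trajectory $(\delta(p, a^i))_{i \geq 0}$ from each state $p$ is eventually periodic, so taking $I$ to be the maximum tail-length and $K$ the lcm of the periods across $p \in Q$ yields uniform pumping constants with $\delta(p, a^{I + jK}) = \delta(p, a^I)$ for all $p \in Q$ and $j \geq 0$. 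Density of $R$ furnishes $x, y \in A^*$ with $x a^I y \in R$, and the pumping property then yields $\{x a^{I+jK} y : j \geq 0\} \subseteq R$. The inclusion $R \subseteq \overline{\maj_m}$ would force $|x|_a + I + jK + |y|_a \leq m(|x|_b + |y|_b)$ for every $j$, which fails for $j$ large; the dual pumping along $b$-trajectories rules out $R \subseteq \maj_m$.

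The main obstacle is arranging the pumping data uniformly across the states of $\CA$: once $I, K$ depend only on $|Q|$ and not on the particular state, a single pattern $x a^{I+jK} y$ is guaranteed to lie in $R$ and forces an unbounded skew of the letter ratio $|w|_a / |w|_b$. After that, the contradiction with the linear constraint imposed by $\maj_m$ or its complement is immediate, and the density computation reduces to a Chernoff-type estimate; no essential technical difficulty is anticipated beyond this bookkeeping.
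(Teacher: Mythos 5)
Your proposal is correct and follows essentially the same route as the paper: both reduce the full \(\reg\)-gap to showing that no non-null (equivalently, by Theorem~\ref{thm:null}, dense) regular language can be contained in \(\maj_m\) or in \(\overline{\maj_m}\), and both obtain the density claim from a concentration estimate (the paper invokes the weak law of large numbers after reducing \(m \geq 2\) to \(m = 2\), where you use a Chernoff bound). The only real difference is the device for producing a letter-count-violating word inside a dense regular language \(R\): you pump an \(a\)- (resp.\ \(b\)-) block to unbounded length using uniform DFA pumping constants, whereas the paper fixes a single long block \(b^{2c}\) and instead shortens the surrounding context \(x,y\) to length at most \(c\) via shortest representatives in the syntactic monoid --- both devices are sound and yield the same conclusion.
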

\iffull
\begin{proof}
\else
\begin{proof}[sketch]
\fi
First we prove that any non-null regular language \(L\) can not be a
 subset of \(M_m\).
Let \(\eta: A^* \rightarrow M\) be the syntactic morphism \(\eta\) and
 monoid \(M\) of \(L\), and let
\(
 c = \max_{m \in M} \min_{w \in \eta^{-1}(m)} |w|
\)
 (this is well-defined natural number since \(M\) is finite).
 By the infinite monkey theorem, \(L\) is not null implies that \(L\)
 has no forbidden word, and thus for the word
 \(b^{2c}\) there exist two words \(x\) and \(y\) such that \(x b^{2c} y\) is
 in \(L\).
 We can assume that \(|x|,|y| \leq c\) without loss of generality by the
 definition of \(c\), which implies
 \(|x b^{2c} y|_a \leq |x|+|y| = 2c \leq |x b^{2c} y|_b\) hence
 \(x b^{2c} y \notin \maj_m\).
 Thus \(L \not\subseteq \maj_m\) and \(\lm(\maj_m) = 0\).
 By using same argument, we can prove that
 \(\um(\maj_m) = 1\) and hence \(\maj_m\) has full \(\reg\)-gap.

In the case \(m = 1\), \(\d(\maj_1) = \d(\maj) = 1/2\) is obvious.
\iffull 
It is enough to show that \(\d(\maj_2) = 0\) holds (since \(\maj_m
 \subseteq \maj_2\) for any \(m \leq 2\)).
 Indeed, we have
\begin{align*}
 \d(\maj_2) &= \lim_{n \rightarrow \infty}
 \frac{\card{\{ w \in A^n \mid |w|_a > 2|w|_b \}}}{2^n}
 = \lim_{n \rightarrow \infty} \frac{\card{\{ w \in A^n \mid |w|_a > 2n/3 \}}}{2^n}
 \\
&= \lim_{n \rightarrow \infty} \pr(|\overline{X}_n - n/2| > n/6) = 0
\end{align*}
 where
 \(\pr(|\overline{X}_n - n/2| > n/6)\) means the probability that the absolute
 value of the difference of
 the number \(\overline{X}_n\) of the occurrences of \(a\)'s in a
 randomly chosen word of length \(n\) and its mean value \(n/2\) is
 larger than \(n/6\); its tends to zero by the weak law of large
 numbers.
\else
For the case \(m \geq 2\), we can prove \(\d(\maj_m) = 0\) by using the
 weak law of large numbers (see the full version~\cite{full} for
 details). \qedd
\fi
\end{proof}

 \section{\(\reg\)-Immesurability of Primitive Words}\label{sec:prim}
A non-empty word \(w \in A^+\) is said to be primitive if
\(u^n = w\) implies \(u = w\) for any
\(u \in A^+\) and \(n \in \nat\).
The set of all primitive words over \(A\) is denoted by \(\prim_A\).
Because the case \(\card{A} = 1\) is meaningless (\(\prim_A = A\) in
this case), hereafter we
always assume \(\card{A} \geq 2\).
Whether \(\prim_A\) is context-free or not is a well-known long-standing
open problem posed by D{\"{o}}m{\"{o}}si, Horv{\'{a}}th and
Ito~\cite{Domosi1991}.
Reis and Shyr~\cite{Reis} proved \(\prim_A^2 = A^+ \setminus \{ a^n \mid a
\in A, n \neq 2 \}\), which intuitively means that every
non-empty word \(w\) not a power of a letter is a product of two
primitive words.
From this result one may think that \(\prim_A\) is ``very large'' in
some sense.
Actually, \(\prim_A\) is
somewhat ``large'' (it is dense in the sense of
Definition~\ref{def:null}), but we can show more stronger property as
\iffull
follows.
\else
follows (see the full version~\cite{full} for the proof).
\fi
\begin{theorem}\label{thm:prim1}
\(\d(\prim_A) = 1\).
\end{theorem}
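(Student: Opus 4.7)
The plan is to prove the equivalent statement \(\d(\overline{\prim_A}) = 0\); by Claim~\ref{claim:basic}(3) this immediately yields \(\d(\prim_A) = 1\). The intuition is that every non-empty non-primitive word is completely determined by a strictly shorter root together with an exponent, so non-primitive words of length \(n\) are exponentially fewer than the \(\card{A^n}\) words in total.

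More precisely, I would first observe that if \(w \in A^n\) is non-primitive (with \(n \geq 1\)), then \(w = u^m\) for some \(u \in A^+\) and some integer \(m \geq 2\). Consequently \(|u| = n/m\) is a proper divisor of \(n\) satisfying \(|u| \leq n/2\), and \(w\) is uniquely recovered from its (shortest) root \(u\). This yields the crude counting bound
\[
 \card{\overline{\prim_A} \cap A^n} \;\leq\; \sum_{d \mid n,\; d \leq n/2} \card{A}^d \;\leq\; \sum_{d=1}^{\lfloor n/2 \rfloor} \card{A}^d,
\]
since each non-primitive word of length \(n\) comes from choosing a root of length \(d\) for some proper divisor \(d\) of \(n\), and there are at most \(\card{A}^d\) such roots.

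Using \(\card{A} \geq 2\), the geometric sum on the right is dominated by \(2\,\card{A}^{n/2}\), so dividing through by \(\card{A^n} = \card{A}^n\) gives
\[
 \frac{\card{\overline{\prim_A} \cap A^n}}{\card{A^n}} \;\leq\; 2\,\card{A}^{-n/2} \;\xrightarrow[n \to \infty]{}\; 0.
\]
There is really no obstacle here: the entire content is the elementary counting bound above, which exploits the exponential gap between the at most \(\card{A}^{n/2}\) possible roots and the \(\card{A}^n\) words of length \(n\). Theorem~\ref{thm:prim1} is thus just the well-known quantitative fact that ``almost every'' word over a non-unary alphabet is primitive; the substantive content about \(\prim_A\) will come from the stronger \(\reg\)-immeasurability statements.
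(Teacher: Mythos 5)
Your proof is correct and follows essentially the same route as the paper: both reduce to showing \(\cd(\overline{\prim_A}) = 0\) by injecting each non-primitive word of length \(n\) into its root of length at most \(n/2\), giving an \(O(\card{A}^{n/2})\) bound that is negligible against \(\card{A}^n\). Your intermediate bound (summing \(\card{A}^d\) over proper divisors \(d \leq n/2\)) is in fact marginally cleaner than the paper's, which separately invokes the fact that \(n\) has at most \(2\sqrt{n}\) divisors, but the argument is the same.
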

\iffull
\begin{proof}
It is enough to show that \(\d(\overline{\prim_A}) = 0\) holds.
One can easily observe that any natural number \(n \in \nat\)
 has at most \(2\sqrt{n}\) divisors.
In addition, for any non-primitive word \(w = v^m\) of length \(n\)
 is uniquely determined by \(v\) (since \(m = n/|v|\)) and \(|v| \leq n/2\).
Hence the number of non-primitive words of length \(n\) satisfies
\[
 \card{\overline{\prim_A} \cap A^n} \leq 2 \sqrt{n} \sum_{i =
 0}^{\lfloor n/2 \rfloor } \card{A^i} \leq 2 \sqrt{n} \cdot
 \card{A}^{\lfloor n/2 \rfloor + 1}.
\]
 By using the above estimation, we can deduce that
\[
 \frac{\card{\overline{\prim_A} \cap A^n}}{\card{A^n}}
 \leq \frac{2 \sqrt{n} \cdot \card{A}^{\lfloor n/2 \rfloor +
 1}}{\card{A}^n} \leq \frac{2 \sqrt{n}}{\card{A}^{n/2 - 1}}
\]
 and it tends to \(0\) if \(n\) tends to infinity (since we assume
 \(\card{A} \geq 2\)). Thus \(\d(\overline{\prim_A}) = 0\).
\end{proof}
\fi

While \(\prim_A\) is ``very large'' (co-null) as stated above, we can also prove that
 \(\prim_A\) is \(\reg^+\)-immune.
The proof relies on an analysis of the structure of the syntactic monoid
of a non-null regular language.
We assume that the reader has a basic
 knowledge of semigroup theory (\cf \cite{pin}):
 Green's
relations \(\CJ, \CR, \CL, \CH\) and a direct consequence of Green's
theorem (an \(\CH\)-class \(H\) in a semigroup \(S\) is a subgroup of \(S\) if
and only if \(H\) contains an idempotent), in particular.
\begin{theorem}\label{thm:prim2}
Any non-null regular language contains infinitely many non-primitive
 words, and hence \(\lm(\prim_A) = 0\).
\end{theorem}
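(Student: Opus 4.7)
The plan is to exploit the interplay between density and the syntactic monoid via the structure of the minimal ideal of a finite semigroup. Since \(\lm(\prim_A) = \sup\{\cd(K) \mid K \subseteq \prim_A, K \in \reg\}\), it suffices to show that every non-null regular language contains infinitely many non-primitive words: for then no regular subset of \(\prim_A\) can be non-null, so every such subset is null by Corollary~\ref{cor:null}, giving \(\lm(\prim_A) = 0\).

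Fix a non-null regular language \(L \subseteq A^*\), let \(\eta: A^* \to M\) be its syntactic morphism and write \(P = \eta(L)\). By Theorem~\ref{thm:null}, \(L\) is dense, which means \(MsM \cap P \neq \emptyset\) for every \(s \in M\). Since \(M\) is finite it has a unique minimal two-sided ideal \(I(M)\), which is contained in every principal ideal \(MsM\); taking \(s\) inside \(I(M)\) itself, density forces \(P \cap I(M) \neq \emptyset\). Pick any \(s \in P \cap I(M)\) and fix a word \(v \in A^+\) with \(\eta(v) = s\) (which exists because \(s \in P = \eta(L)\) implies \(s\) is in the image of \(\eta\)).

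Now I invoke the structural fact that the minimal ideal of a finite semigroup is completely simple, so every \(\CH\)-class inside \(I(M)\) contains an idempotent and hence, by Green's theorem, is a group. Let \(H\) be the group \(\CH\)-class of \(s\) and set \(m = |H|\). Then \(s^{m} = e_H\) is the identity of \(H\), and therefore \(s^{km+1} = s \in P\) for every \(k \geq 0\). Consequently \(\eta(v^{km+1}) = s \in P\), which yields \(v^{km+1} \in L\) for every \(k \geq 0\). For \(k \geq 1\) the exponent \(km+1 \geq 2\), so these words are non-primitive, giving infinitely many non-primitive words in \(L\). Combining this with the reduction in the first paragraph completes the proof.

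The main technical point I expect to be the most delicate is the characterisation of density as \(P \cap I(M) \neq \emptyset\): it rests on the elementary but crucial observation that the minimal ideal is contained in every principal ideal \(MsM\), which in the formal writeup should be justified either by a direct argument or a citation to a standard semigroup-theoretic reference. Once this is in place, the use of Green's theorem and completely simple semigroups is entirely routine, and the reduction from ``infinitely many non-primitive words in every non-null regular \(L\)'' to \(\lm(\prim_A) = 0\) is immediate.
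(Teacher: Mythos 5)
Your argument follows essentially the same route as the paper's proof: pass to the syntactic monoid, locate an element of the image of \(L\) in the minimal \(\CJ\)-class, observe via Green's theorem that its \(\CH\)-class is a group, and pump \(v^{km+1}\). The only substantive differences are in how the two intermediate facts are obtained: you reach the minimal ideal through density (Theorem~\ref{thm:null}) plus the standard fact that the kernel of a finite semigroup is completely simple, whereas the paper argues directly that every non-\(\leq_\CJ\)-minimal element has a null preimage (via forbidden words and the infinite monkey theorem) and then verifies \(t \,\CH\, t^n\) by an explicit computation with idempotent powers. Both routes are sound, and since the \(\leq_\CJ\)-minimal elements of a finite monoid form exactly the minimal ideal, the two arguments are really locating the same element.

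The one genuine loose end is your choice of \(v \in A^+\) with \(\eta(v) = s\). The justification you give --- that \(s \in P = \eta(L)\) lies in the image of \(\eta\) --- only produces \emph{some} preimage word, which could be \(\varepsilon\) (namely when \(s = 1\) and \(\varepsilon \in L\)). If \(v = \varepsilon\), every power \(v^{km+1}\) is empty and the argument produces nothing; this is precisely the point on which the paper spends several lines, constructing a non-empty word in \(\eta^{-1}(t)\) by appending a letter and using \(\CJ\)-minimality to return to \(t\). The gap is easily patched: if \(s \neq 1\) then every preimage of \(s\) is automatically non-empty, and if \(1 \in I(M)\) then \(M = I(M)\) is a group, so \(\eta(a)^{\card{M}} = 1\) exhibits the non-empty word \(a^{\card{M}} \in \eta^{-1}(1)\). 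Alternatively, take any non-empty \(u\) with \(\eta(u) \in I(M)\) and use density to find \(x, y\) with \(xuy \in L\); then \(v = xuy\) is non-empty and \(\eta(v) \in P \cap I(M)\). With that repair your proof is complete.
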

\begin{proof}
Let \(L\) be a regular language over \(A\) with a positive density
 \(\d(L) > 0\).
We consider \(\eta: A^* \rightarrow M\) the syntactic morphism \(\eta\) and
 the syntactic monoid \(M\) of \(L\), and let \(S\) be a subset of \(M\)
 satisfying \(\eta^{-1}(S) = L\).
 \(L\) is regular means that \(M\) is finite, and hence \(M\) has at
 least one \(\leq_\CJ\)-minimal element.

We first show that \(S\) contains a \(\leq_\CJ\)-minimal element \(t\).
This is rather clear because, for any non-\(\leq_\CJ\)-minimal element
 \(s\), its language \(\eta^{-1}(s) \subseteq A^*\) is null: \(s\) is
 non-\(\leq_\CJ\)-minimal means that there
 is an other element \(t\) such that \(t <_\CJ s\)
 (\ie, \(M t M \subsetneq M s M\)), whence \(s \notin M t M\)
 which implies that any word \(w \in \eta^{-1}(t)\) is a forbidden word
 of \(\eta^{-1}(s)\). Thus by the infinite monkey theorem
 \(\eta^{-1}(s)\) is null.

Clearly, we have \(t^n \leq_\CJ t\) and thus
 \(t \, \CJ \, t^n\) holds for any \(n > 1\) by the
 \(\leq_\CJ\)-minimality of \(t\).
 \(t \, \CJ \, t^n\) implies that there is a pair of words \(x,y\) such that \(x t^n y =
 t\). Since \(M\) is finite, 
 \(x^m\) is an idempotent for some \(m > 0\)
 (\ie, \(x^{2m} = x^m\)).
 Thus we obtain \(t = x t^n y = x (t) t^{n-1} y = x^2 (t) (t^{n-1} y)^2 = \cdots = x^m t (t^{n-1} y)^m =
 x^m x^m t (t^{n-1} y)^m = x^m t\) whence
 \(t = t^n (y (t^{n-1} y)^{m-1})\).
 It follows that \(t \, \CR \,
 t^n\).
 Dually, we also obtain \(t \, \CL \, t^n\) and hence we can deduce that
 \(t \, \CH \, t^n\) holds.
By the finiteness of \(M\), there exists some \(n > 0\) such that
 \(t^n\) is an idempotent.
Thanks to Green's theorem, the \(\CH\)-equivalent class \(H_t\) of \(t\)
 is a subgroup of \(M\) with the identity element \(t^n\).
 Because
 \(\eta\) is surjective, we can take a word \(w'\) from
 \(\eta^{-1}(t)\).
 Let \(t' = \eta(w' a) = t \eta(a)\) for some letter \(a \in A\), then by the
 \(\leq_\CJ\)-minimality of \(t\),
 we can take some words \(x,y \in A^*\)
 so that \(\eta(x w' a y) = \eta(x) t' \eta(y) = t\). Hence we can
 deduce that \(\eta^{-1}(t)\) contains a non-empty word \(w = x w' a
 y\).
 Then for any \(\varepsilon \neq w \in \eta^{-1}(t)\) and \(m \geq 1\), we have
\[
 \eta(w^{mn+1}) = t^{mn+1} = (t^{n})^m \cdot t
  = t \in S
\]
 which means that \(L \supseteq \eta^{-1}(t)\)
 contains infinitely many non-primitive words  \(w^{mn+1}\). \qedd
 \end{proof}

\begin{corollary}[of Theorem~\ref{thm:prim1}~and~\ref{thm:prim2}]
\(\prim_A\) has full \(\reg\)-gap.
\end{corollary}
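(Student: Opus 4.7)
The plan is to combine the two preceding theorems directly: Theorem~\ref{thm:prim1} will be used to control the upper density $\um(\prim_A)$ via the complement, and Theorem~\ref{thm:prim2} will be used essentially as a black box to pin down the lower density $\lm(\prim_A)$. No substantial new ideas or calculations should be needed; the task is really just to unfold the definitions of $\plm{\reg}$ and $\pum{\reg}$ and plug in what we already know.

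First I would observe that Theorem~\ref{thm:prim2} gives $\lm(\prim_A) = 0$ immediately. Indeed, any regular language $K \subseteq \prim_A$ with $\cd(K) \neq \bot$ must, by Theorem~\ref{thm:prim2}, be null, since it cannot contain any non-primitive word; so the supremum defining $\plm{\reg}(\prim_A)$ is $0$ (witnessed, for example, by $K = \emptyset$).

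Next I would show $\um(\prim_A) = 1$. Let $K \in \reg_A$ be any regular superset of $\prim_A$. Then $\overline{K} \subseteq \overline{\prim_A}$, and by Theorem~\ref{thm:prim1} the right-hand side satisfies $\cd(\overline{\prim_A}) = 1 - \d(\prim_A) = 0$ (using Claim~\ref{claim:basic}(3) and the fact that a natural density is a density). Since $K$ is regular, so is $\overline{K}$, and by Corollary~\ref{cor:ratden} it has a density; applying Claim~\ref{claim:basic}(1) we get $\cd(\overline{K}) \leq \cd(\overline{\prim_A}) = 0$, hence $\cd(K) = 1$ by Claim~\ref{claim:basic}(3). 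Therefore every regular superset of $\prim_A$ has density exactly $1$, and the infimum defining $\pum{\reg}(\prim_A)$ is $1$.

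Combining, $\pum{\reg}(\prim_A) - \plm{\reg}(\prim_A) = 1 - 0 = 1$, so $\prim_A$ has full $\reg$-gap. The hard part is really just Theorem~\ref{thm:prim2}, which has already been proved; the corollary itself is a routine bookkeeping step, so I do not anticipate any obstacle beyond making sure to invoke Corollary~\ref{cor:ratden} to justify that $\cd(\overline{K})$ exists before applying the monotonicity inequality.
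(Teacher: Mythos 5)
Your proof is correct and is exactly the argument the paper intends: the corollary is stated without proof precisely because Theorem~\ref{thm:prim2} already yields $\lm(\prim_A)=0$ and Theorem~\ref{thm:prim1} forces every regular superset to be co-null, giving $\um(\prim_A)=1$. Your unfolding of the definitions, including the use of Corollary~\ref{cor:ratden} and Claim~\ref{claim:basic}, is the routine bookkeeping the author left implicit.
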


\begin{remark}\label{rem:prim2}
We emphasise that the assumption ``\(L\) is non-null'' in Theorem~\ref{thm:prim2} is quite tight,
 since a slightly weaker assumption ``\(L\) is of exponential growth''
 (\ie, \(\card{L \cap A^n}\) is exponential for \(n\)) does not imply
 that \(L\) contains non-primitive words.
 A trivial counterexample is
 \(L_0 = \{a,b\}^* c\) over \(A = \{a,b,c\}\):
 \(\card{L_0 \cap A^n} = 2^{n-1} \, (n \geq 1)\) is exponential but \(L_0\)
 only consists of primitive words.
 \(L_0\) has a \(cc\) as a forbidden word, hence it is null by the
 infinite monkey theorem. Thus \(L_0\) is not a counterexample of Theorem~\ref{thm:prim2}.
\end{remark}
 \section{Conclusion and Open Problems}\label{sec:fut}
In this paper we proposed \(\reg\)-measurability and showed that several
context-free languages are
\(\reg\)-measurable, excluding \(\maj_m\).
Interestingly, it is shown that, like \(\gold\) and \(\kemp\),
languages that have been considered as complex from a combinatorial
viewpoint are, actually, easy to asymptotically approximate by
regular languages.
It is also interesting that a modified majority language \(\maj_2\) is
just a deterministic context-free but it is complex from a
measure theoretic viewpoint.
Its complement \(\overline{\maj_2}\) is also deterministic context-free, and actually it
is co-null but \(\reg^+\)-immune (\ie, has full
\(\reg\)-gap).
This means that \(\overline{\maj_2}\) is as complex as \(\prim_A\) from
a viewpoint of \(\reg\)-measurability.

The following fundamental problems are still open and we consider these
to be future work.
\begin{problem}\label{prob:cflmeasurable}
Can we give an alternative characterisation of the null (resp. co-null) context-free
 languages (like Theorem~\ref{thm:null})? 
\end{problem}
\begin{problem}\label{prob:cflmeasurable}
Can we give an alternative characterisation of the \(\reg\)-measurable context-free languages?
\end{problem}
\begin{problem}\label{prob:separate}
Can we find a language class that can ``separate'' \(\prim_A\)
 and \(\cfl\)?
\ie, is there \(\CC\) such that
 \(\prim_A\) has full \(\CC\)-gap but no co-null context-free language has
 full \(\CC\)-gap, or
 \(\prim_A\) is \(\CC\)-immeasurable but any co-null context-free language is
 \(\CC\)-measurable?
\end{problem}
The our results (Theorem~\ref{thm:pmaj},
\ref{thm:prim1}~and~\ref{thm:prim2})
tell us that the class \(\reg\) of regular languages can not separate
\(\prim_A\) and  \(\cfl\). However, it is still open whether the situation
is the same or not when \(\CC = \dcfl, \ucfl, \cfl\) or other extension of regular languages.
Notice that \emph{if} the answer of Problem~\ref{prob:separate} is
``yes'', then \(\prim_A\) is not context-free.\\

\noindent {\bf Acknowledgement:}
The author would like to thank Takanori Maehara (RIKEN AIP) and Fazekas
Szil\'ard (Akita University) whose helpful discussion
were an enormous help to me.
The author also thank to 
anonymous reviewers for many valuable comments.
This work was supported by JSPS KAKENHI Grant Number JP19K14582.

\end{document}